\def\savesymb#1{%
  \expandafter\let\expandafter\oldsym\expandafter=\csname#1\endcsname
  \expandafter\global\expandafter\let\csname#1BAK\endcsname=\oldsym
}
\def\restoresymb#1{%
  \expandafter\let\expandafter\oldsymb\expandafter=\csname#1BAK\endcsname
  \expandafter\global\expandafter\let\csname#1\endcsname=\oldsymb
}
\def\qed {{                
   \parfillskip=0pt        
   \widowpenalty=10000     
   \displaywidowpenalty=10000  
   \finalhyphendemerits=0  
                           %
   \leavevmode             
   \unskip                 
   \nobreak                
   \hfil                   
   \penalty50              
   \hskip.2em              
   \null                   
   \hfill                  
   $\square$
                           %
   \par}}                  
\def\wc{\ensuremath{\mathord{\star}}\xspace}
\def\lca{\ensuremath{\mathrm{lca}}\xspace}
\def\std{\ensuremath{\mathrm{std}}\xspace}
\def\inj{\ensuremath{\mathrm{inj}}\xspace}
\def\anc{\ensuremath{\mathrm{anc}}\xspace}
\def\lab{\ensuremath{\mathit{lab}}\xspace}
\def\child{\ensuremath{\mathit{child}}\xspace}
\def\desc{\ensuremath{\mathit{desc}}\xspace}
\def\root{\ensuremath{\mathit{root}}\xspace}
\def\true{\ensuremath{\mathit{true}}\xspace}
\def\false{\ensuremath{\mathit{false}}\xspace}
\def\height{\ensuremath{\mathit{height}}\xspace}
\def\deg{\ensuremath{\mathit{deg}}\xspace}
\begin{document}

\title{On Injective Embeddings of Tree Patterns}

\author{
  Jakub Michaliszyn\inst{1} \and
  Anca Muscholl\inst{2} \and
  S\l{}awek Staworko\inst{3} \and
  Piotr Wieczorek\inst{1} \and
  Zhilin Wu\inst{4}
}

\institute{
  \fontsize{8}{10}
  \selectfont
  University of Wroc\l{}aw
  \and
  LaBRI, University of Bordeaux
  \and
  Mostrare, INRIA Lille, University of Lille
  \and
  State Key Laboratory of Computer Science, 
  Institute of Software,\\
  Chinese Academy of Sciences
}

\thispagestyle{empty}
\maketitle
\vspace{-10pt}
\begin{abstract}
  We study three different kinds of embeddings of tree patterns:
  weakly-injective, ancestor-preserving, and lca-preserving. While
  each of them is often referred to as injective embedding, they form
  a proper hierarchy and their computational properties vary (from P
  to NP-complete). We present a thorough study of the complexity of
  the model checking problem, i.e., is there an embedding of a given
  tree pattern in a given tree, and we investigate the impact of
  various restrictions imposed on the tree pattern: bound on the
  degree of a node, bound on the height, and type of allowed labels
  and edges.
\end{abstract}

\section{Introduction}
An embedding is a fundamental notion with numerous applications in
computer science, e.g., in graph pattern matching
(cf.~\cite{Fan12}). Usually, an embedding is defined as a
structure-preserving mapping that is typically required to be
injective. Tree patterns are a special class of graph patterns that
found applications, for instance in XML databases~\cite{KiMa94,ACLS02}
where they form a functional equivalent of (acyclic) conjunctive
queries for relational databases. Tree patterns are typically matched
against trees and are allowed to use special descendant edges (double
lines in Fig.~\ref{fig:embeddings}) that can be mapped to paths rather
than to single edges as it is the case with the standard child edges.

Traditionally, the semantics of tree patterns for XML is defined using
non-injective embeddings~\cite{ACLS02,MiSu04}
(Fig.~\ref{fig:std}), which is reminiscent of relational data. Since
XML data has more structure, it makes sense to exploit the tree
structure when defining tree pattern embeddings. In this context, it
is interesting to consider injective
embeddings~\cite{David08,HaDe06,FLMWW10,KiMa94}. However, the use of
descendant edges makes it cumbersome to define what exactly an
injective embedding of a tree pattern should be, and consequently,
different notions have been employed.

A \emph{weakly-injective} embedding requires only the mapping to be
injective and recent developments in graph matching suggest that such
embeddings are crucial for expressing important patterns occurring in
real life databases~\cite{FLMWW10}. They are a natural choice when we
do not wish to constrain in any way the vertical relationship of the
images of two children of some node connected with descendant
edges. However, descendant edges can be mapped to paths that
interleave, which means that even if there is a weakly-injective
embedding between a tree pattern and a tree, there need not be a
structural similarity between the tree and the tree pattern
(Fig.~\ref{fig:inj}). This is contrary to the structure-preservation
nature of embeddings and hence the prefix \emph{weakly}. One could
strengthen the restriction and prevent the embedding from introducing
vertical relationships between the nodes, which gives us
\emph{ancestor-preserving} embeddings~\cite{David08}. In this case two
descendant edges are mapped into paths that might overlap at the
beginning but eventually branch (Fig.~\ref{fig:anc}). Finally, we can
go one step further and require the paths not to overlap at all, which
translates to \emph{lca-preserving} embeddings~\cite{HaDe06}, i.e.,
embeddings that preserve lowest common ancestors of any pair of nodes
(Fig.~\ref{fig:lca}).

Unfortunately, there is a lack of a systematic and thorough treatment
of injective embeddings and there is a tendency to name each of the
embeddings above as simply injective, which could be potentially
confusing and error-prone. This paper fills this gap and shows that
injective embeddings form a proper hierarchy and that their
computational properties vary significantly (from P to NP-complete).
This further strengthens our belief that the different injective
embeddings should not be confused. More precisely, we study the
complexity of the model checking problem, i.e., given a tree pattern
$p$ and a tree $t$ is there an embedding (of a given type) of $p$ in
$t$, and we investigate the impact of various restrictions imposed on
the tree pattern: bound on the degree of a node, bound on the height,
and type of allowed labels and edges.

Our results show that while lca-preserving embeddings are in P, both
weakly-injective and ancestor-preserving embeddings are
NP-complete. Bounding the height of the pattern practically does not
change the picture but bounding the degree of a node in the pattern
renders ancestor-preserving embeddings tractable while weakly-injective
embeddings remain NP-complete. Our results show that the high
complexity springs from the use of descendant edges: if we disallow
them, the hierarchy collapses and all injective embeddings fall into
P. On the other hand, the use of node label is not essential, the
complexity remains unchanged even if we consider tree patterns using
the wildcard symbol only, essentially patterns that query only
structural properties of the tree.

Injective embeddings of tree patterns are closely related to a number
of well-established and studied notions, including \emph{tree
  inclusion}~\cite{KiMa95,Valiente05}, \emph{minor
  containment}~\cite{RoSe86,RoSe95}, \emph{subgraph
  homeomorphism}~\cite{Chung87,MaTh92}, and \emph{graph pattern
  matching}~\cite{FLMWW10}. Not surprisingly, some of our results
are subsumed by or can be easily obtained from existing results, and
conversely, there are some that are subsumed by ours (see
Sec.~\ref{sec:related-work} for a complete discussion of related
work). The principal aim of this paper is, however, to catalog the
different kinds of injective embeddings of tree patterns and identify
what aspects of tree patterns lead to intractability. To that end, all
our reductions and algorithms are new and the reductions clearly
illustrate the source of complexity of injective tree patterns.

This paper is organized as follows. In Sec.~\ref{sec:preliminaries} we
define basic notions and in Sec.~\ref{sec:injective-embeddings} we
define formally the three types of injective embeddings of tree
patterns. In Sec.~\ref{sec:complexity} we study the model checking
problem of the injective embeddings. Discussion of related work is in
Sec.~\ref{sec:related-work} and in Sec.~\ref{sec:concl-future-work} we
summarize our results and outline further directions of study. Some
proofs have been moved to appendix.

\section{Preliminaries}
\label{sec:preliminaries}
We assume a fixed and finite set of node labels $\Sigma$ and use a
wildcard symbol $\wc$ not present in $\Sigma$. A \emph{tree
  pattern}~\cite{KiMa94,ACLS02} is a tuple
$p=(N_p,\root_p,\lab_p,\child_p,\desc_p)$, where $N_p$ is a finite set
of nodes, $\root_p\in N_p$ is the root node,
$\lab_p:N_p\rightarrow\Sigma\cup\{\wc\}$ is a labeling function,
$\child_p\subseteq N_p\times N_p$ is a set of child edges, and
$\desc_p\subseteq N_p\times N_p$ is a set of (proper) descendant
edges. We assume that $\child_p\cap\desc_p=\emptyset$, that the
relation $\child_p\cup\desc_p$ is acyclic and require every non-root
node to have exactly one predecessor in this relation. A \emph{tree}
is a tree pattern that has no descendant edges and uses no wildcard
symbols $\wc$.

An example of a tree pattern can be found in Fig.~\ref{fig:embeddings}
(descendant edges are drawn with double lines). Sometimes, we use
unranked terms to represent trees and the standard XPath syntax to
represent tree patterns. XPath allows to navigate the tree with a
syntax similar to directory paths used in the UNIX file system. For
instance, in Fig.~\ref{fig:embeddings} $p_0$ can be written as
$f/a[.//b/c]//b$. In the sequel, we use $p,p_0,p_1,\ldots$ to range
over tree patterns and $t,t_0,t_1,\dots$ to range over trees.


Given a binary relation $R$, we denote by $R^+$ the transitive closure
of $R$, and by $R^\ast$ the transitive and reflexive closure of
$R$. Now, fix a pattern $p$ and take two of its nodes $n,n'\in
N_p$. We say that $n'$ is a \emph{$|$-child} of $n$ if
$(n,n')\in\child_p$, $n'$ is a \emph{$\|$-child} of $n$ if
$(n,n')\in\desc_p$, and $n'$ is simply a \emph{child} of $n$ in $p$ if
$(n,n')\in\child_p\cup\desc_p$. Also, $n'$ is a \emph{descendant} of
$n$, and $n$ an \emph{ancestor} of $n'$, if
$(n,n')\in(\child_p\cup\desc_p)^\ast$. Note that descendantship and
ancestorship are reflexive: a node is its own ancestor and its own
descendant. The \emph{depth} of a node $n$ in $p$ is the length of the
path from the root node $\root_p$ to $n$, and here, a path is a
sequence of edges, and in particular, the depth of the root node is
$0$.  The \emph{lowest common ancestor} of $n$ and $n'$ in $p$,
denoted by $\lca_p(n,n')$, is the deepest node that is an ancestor of
$n$ and $n'$. The \emph{size} of a tree pattern $p$, denoted $|p|$, is
the number of its nodes. The \emph{degree} of a node $n$, denoted
$\deg_p(n)$, is the number of its children. The \emph{height} of a
tree pattern $p$, denoted $\height(p)$, is the depth of its deepest
node.

The standard semantics of tree patterns is defined using non-injective
embeddings which map the nodes of a tree pattern to the nodes of a
tree in a manner that respects the wildcard and the semantics of the
edges. Formally, an {\em embedding} of a tree pattern $p$ in a tree
$t$ is a function $h : N_p \rightarrow N_t$ such that:
\begin{enumerate}
\itemsep0pt
\item[$1$.] $h(\root_p)=\root_t$,
\item[$2$.] for every $(n,n')\in\child_p$,
  $(h(n),h(n'))\in\child_t$,
\item[$3$.] for every $(n,n')\in\desc_p$,
  $(h(n),h(n'))\in(\child_t)^+$,
\item[$4$.] for every $n\in N_p$, $\lab_t(h(n))=\lab_p(n)$ unless
  $\lab_p(n)=\wc$.
\end{enumerate}
We write $t \preccurlyeq_\std p$ if there exists a (standard)
embedding of $p$ in $t$. Note that the semantics of a descendant edge
of the tree pattern is in fact that of a \emph{proper descendant}: a
descendant edge is mapped to a nonempty path in the tree.

\section{Injective embeddings}
\label{sec:injective-embeddings}
We identify three subclasses of injective embeddings that restrict the
standard embedding by adding one additional condition each. First, we
have the \emph{weakly-injective} embedding of $p$ in $t$
($t\preccurlyeq_\inj p$):
\begin{enumerate}
\item[$5'$.] $h$ is an injective function, i.e., $h(n_1)\neq h(n_2)$
  for any two different nodes $n_1$ and $n_2$ of $p$.
\end{enumerate}
Next, we have the \emph{ancestor-preserving} embedding of $p$ in $t$
($t\preccurlyeq_\anc p$):
\begin{enumerate}
\item[$5''$.] $h(n_1)$ is an ancestor of $h(n_2)$ in $t$ if and only
  if $n_1$ is an ancestor of $n_2$ in $p$, for any two nodes $n_1$ and
  $n_2$ of $p$. More formally, for any $n_1,n_2\in N_p$
  \[
  (h(n_1),h(n_2))\in\child_t^\ast \iff
  (n_1,n_2)\in(\child_p\cup\desc_p)^\ast.
  \]
\end{enumerate}
Finally, we have the \emph{lca-preserving} embedding of $p$ in $t$
($t\preccurlyeq_\lca p$):
\begin{enumerate}
\item[$5'''$.] $h$ maps the lowest common ancestor of nodes $n_1$ and
  $n_2$ to the lowest common ancestor of $h(n_1)$ and $h(n_2)$, i.e.,
  for any pair of nodes $n_1$ and $n_2$ of $p$ we have
  $\lca_t(h(n_1),h(n_2)) =  h(\lca_p(n_1,n_2))$.
\end{enumerate}
In Fig.~\ref{fig:embeddings} we illustrate various embeddings of a
tree pattern $p_0$. 
\begin{figure}[htb]
  \centering
  \subfigure[non-injective $t_0\preccurlyeq_\std p_0$\label{fig:std}]{
  \begin{tikzpicture}[yscale=0.6]
    \path[use as bounding box] (-3.75,1) rectangle (1.75,-4.5);
    \begin{scope}[xshift=-2.5cm]
    \node at (0,0)      (m0) {$f$};
    \node at (0,-1)     (m1) {$a$} edge[-,semithick] (m0);
    \node at (0,-2)     (m3) {$b$} edge[-,semithick] (m1);
    \node at (0,-3)     (m4) {$c$} edge[-,semithick] (m3);
    \end{scope}
    \node at (0,0)      (n0) {$f$};
    \node at (0,-1)     (n1) {$a$} edge[-,semithick] (n0);
    \node at (0.75,-2)  (n2) {$b$} edge[-,double,semithick] (n1);
    \node at (-0.75,-2) (n3) {$b$} edge[-,double,semithick] (n1);
    \node at (-0.75,-3) (n4) {$c$} edge[-,semithick] (n3);
    \begin{scope}[semithick,blue!60!black,>=stealth]
    \draw (n0) edge[->,bend right] (m0);
    \draw (n1) edge[->,bend right] (m1);
    \draw (n2) edge[->,bend right] (m3);
    \draw (n3) edge[->,bend left] (m3);
    \draw (n4) edge[->,bend left] (m4);
    \end{scope}
  \end{tikzpicture}
  }
  \subfigure[weakly-injective $t_1\preccurlyeq_\inj p_0$\label{fig:inj}]{
  \begin{tikzpicture}[yscale=0.6,yshift=0.5cm]
    \path[use as bounding box] (-3.75,1) rectangle (1.75,-4.5);
    \begin{scope}[xshift=-2.5cm]
    \node at (0,0)      (m0) {$f$};
    \node at (0,-1)     (m1) {$a$} edge[-,semithick] (m0);
    \node at (0,-2)     (m2) {$b$} edge[-,semithick] (m1);
    \node at (0,-3)     (m3) {$b$} edge[-,semithick] (m2);
    \node at (0,-4)     (m4) {$c$} edge[-,semithick] (m3);
    \end{scope}
    \node at (0,0)      (n0) {$f$};
    \node at (0,-1)     (n1) {$a$} edge[-,semithick] (n0);
    \node at (0.75,-2)  (n2) {$b$} edge[-,double,semithick] (n1);
    \node at (-0.75,-2) (n3) {$b$} edge[-,double,semithick] (n1);
    \node at (-0.75,-3) (n4) {$c$} edge[-,semithick] (n3);
    \begin{scope}[semithick,blue!60!black,>=stealth]
    \draw (n0) edge[->,bend right] (m0);
    \draw (n1) edge[->,bend right] (m1);
    \draw (n2) edge[->,bend right] (m2);
    \draw[bend angle=20] (n3) edge[->,bend right] (m3);
    \draw[bend angle=20] (n4) edge[->,bend right] (m4);
    \end{scope}
  \end{tikzpicture}
  }
  \subfigure[ancestor-preserving $t_2\preccurlyeq_\anc p_0$\label{fig:anc}]{
  \begin{tikzpicture}[yscale=0.6]
    \path[use as bounding box] (-3.75,1) rectangle (1.75,-4.5);
    \begin{scope}[xshift=-2.5cm]
    \node at (0,0)      (m0) {$f$};
    \node at (0,-1)     (m1) {$a$} edge[-,semithick] (m0);
    \node at (0,-2)     (mo) {$g$} edge[-,semithick] (m1);
    \node at (0.75,-3)  (m2) {$b$} edge[-,semithick] (mo);
    \node at (-0.75,-3) (m3) {$b$} edge[-,semithick] (mo);
    \node at (-0.75,-4) (m4) {$c$} edge[-,semithick] (m3);
    \end{scope}
    \node at (0,0)      (n0) {$f$};
    \node at (0,-1)     (n1) {$a$} edge[-,semithick] (n0);
    \node at (0.75,-2)  (n2) {$b$} edge[-,double,semithick] (n1);
    \node at (-0.75,-2) (n3) {$b$} edge[-,double,semithick] (n1);
    \node at (-0.75,-3) (n4) {$c$} edge[-,semithick] (n3);
    \begin{scope}[semithick,blue!60!black,>=stealth]
    \draw (n0) edge[->,bend right] (m0);
    \draw (n1) edge[->,bend right] (m1);
    \draw[bend angle=15] (n2) edge[->,bend left] (m2);
    \draw[bend angle=15] (n3) edge[->,bend right] (m3);
    \draw[bend angle=15] (n4) edge[->,bend left] (m4);
    \end{scope}
  \end{tikzpicture}
  }
  \subfigure[lca-preserving $t_3\preccurlyeq_\lca p_0$\label{fig:lca}]{
  \begin{tikzpicture}[yscale=0.6]
    \path[use as bounding box] (-3.75,1) rectangle (1.75,-4.5);
    \begin{scope}[xshift=-2.5cm]
    \node at (0,0)      (m0) {$f$};
    \node at (0,-1)     (m1) {$a$} edge[-,semithick] (m0);
    \node at (0.75,-2)  (m2) {$b$} edge[-,semithick] (m1);
    \node at (-0.75,-2) (mo) {$g$} edge[-,semithick] (m1);
    \node at (-0.75,-3) (m3) {$b$} edge[-,semithick] (mo);
    \node at (-0.75,-4) (m4) {$c$} edge[-,semithick] (m3);
    \end{scope}
    \node at (0,0)      (n0) {$f$};
    \node at (0,-1)     (n1) {$a$} edge[-,semithick] (n0);
    \node at (0.75,-2)  (n2) {$b$} edge[-,double,semithick] (n1);
    \node at (-0.75,-2) (n3) {$b$} edge[-,double,semithick] (n1);
    \node at (-0.75,-3) (n4) {$c$} edge[-,semithick] (n3);
    \begin{scope}[semithick,blue!60!black,>=stealth]
    \draw (n0) edge[->,bend right] (m0);
    \draw (n1) edge[->,bend right] (m1);
    \draw (n2) edge[->,bend right] (m2);
    \draw[bend angle=15] (n3) edge[->,bend left] (m3);
    \draw[bend angle=15] (n4) edge[->,bend left] (m4);
    \end{scope}
  \end{tikzpicture}
  }
  \caption{Embeddings of a tree pattern $p_0$.}
  \label{fig:embeddings}
\end{figure}

We point out that injective embeddings form a hierarchy, and in
particular, lca-preserving and ancestor-preserving embeddings are
weakly-injective.
\begin{proposition}
  \label{prop:1}
  For any tree $t$ and tree pattern $p$, 1)
  $t\preccurlyeq_\lca p \Rightarrow t \preccurlyeq_\anc p$, 2)
  $t\preccurlyeq_\anc p \Rightarrow t \preccurlyeq_\inj p$, and 3)
  $t\preccurlyeq_\inj p \Rightarrow t \preccurlyeq_\std p$.
\end{proposition}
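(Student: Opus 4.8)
The plan is to descend the hierarchy, proving $3$, then $2$, then $1$, factoring out one structural observation that does most of the work. That observation is: for \emph{any} embedding $h$ (i.e.\ any $h$ satisfying conditions $1$--$4$), if $n_1$ is a proper ancestor of $n_2$ in $p$, then $h(n_1)$ is a proper ancestor of $h(n_2)$ in $t$. To see this I would take the unique downward path from $n_1$ to $n_2$ in $p$ and walk it edge by edge: by condition $2$ a $|$-child edge maps to a single edge of $t$, and by condition $3$ a $\|$-child edge maps to a nonempty path of $t$, so every step strictly increases the depth in $t$; composing along a path of length at least $1$ places $h(n_2)$ strictly below $h(n_1)$, and in particular $h(n_1)\neq h(n_2)$. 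Implication $3$ then needs no work at all: a weakly-injective embedding satisfies $1$--$4$ by definition (it merely adds $5'$), so the very same $h$ witnesses $t\preccurlyeq_\std p$.

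For implication $2$ I would argue that condition $5''$ already entails injectivity, after which the same $h$ witnesses $t\preccurlyeq_\inj p$. Suppose $h(n_1)=h(n_2)$. Then $(h(n_1),h(n_2))\in\child_t^\ast$ and $(h(n_2),h(n_1))\in\child_t^\ast$ hold trivially by reflexivity, so the biconditional $5''$ forces both $(n_1,n_2)\in(\child_p\cup\desc_p)^\ast$ and $(n_2,n_1)\in(\child_p\cup\desc_p)^\ast$; that is, each of $n_1,n_2$ is an ancestor of the other. Since $\child_p\cup\desc_p$ is acyclic, ancestorship is antisymmetric, so $n_1=n_2$ and $h$ is injective.

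Implication $1$ is the substantive one, and I expect the main obstacle to be the ``only if'' half of $5''$, which I can only establish after first proving that lca-preserving embeddings are injective. The ``if'' half is immediate from $5'''$: if $n_1$ is an ancestor of $n_2$ then $\lca_p(n_1,n_2)=n_1$, hence $h(n_1)=h(\lca_p(n_1,n_2))=\lca_t(h(n_1),h(n_2))$, which says exactly that $h(n_1)$ is an ancestor of $h(n_2)$. For injectivity, assume $h(n_1)=h(n_2)$. If one of the two is a proper ancestor of the other, the structural observation already yields $h(n_1)\neq h(n_2)$, a contradiction; otherwise $m=\lca_p(n_1,n_2)$ is a proper ancestor of both, so on one hand $5'''$ gives $h(m)=\lca_t(h(n_1),h(n_2))=h(n_1)$, while on the other hand the structural observation makes $h(m)$ a proper ancestor of $h(n_1)$, whence $h(m)\neq h(n_1)$, again a contradiction. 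Thus $h$ is injective. The ``only if'' half now follows: if $h(n_1)$ is an ancestor of $h(n_2)$ then $\lca_t(h(n_1),h(n_2))=h(n_1)$, so $5'''$ gives $h(\lca_p(n_1,n_2))=h(n_1)$, and injectivity collapses this to $\lca_p(n_1,n_2)=n_1$, i.e.\ $n_1$ is an ancestor of $n_2$, completing condition $5''$ and the proof.
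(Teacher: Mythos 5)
Your proof is correct and follows essentially the same route as the paper's: implication~3 is immediate, implication~2 uses antisymmetry of ancestorship exactly as the paper does, and implication~1 derives condition $5''$ from $5'''$. The only difference is that you are more careful on implication~1 — you first establish injectivity of lca-preserving embeddings (via your structural observation about proper ancestors) before concluding $\lca_p(n_1,n_2)=n_1$ from $h(\lca_p(n_1,n_2))=h(n_1)$, a step the paper's proof leaves implicit — which is a welcome tightening rather than a divergence.
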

It is also easy to see that the hierarchy is proper. For that, take
Fig.~\ref{fig:embeddings} and note that $t_0\preccurlyeq_\std p_0$ but
$t_0\not\preccurlyeq_\inj p_0$, $t_1\preccurlyeq_\inj p_0$ but
$t_1\not\preccurlyeq_\anc p_0$, and finally, $t_2\preccurlyeq_\anc
p_0$ but $t_2\not\preccurlyeq_\lca p_0$. We point out, however, that
the hierarchy of injective embeddings collapses if we disallow
descendant edges in tree patterns.
\begin{proposition}
  \label{prop:4}
  For any tree $t$ and any tree pattern $p$ that does not use
  descendant edges, $t\preccurlyeq_\inj p$ iff $t\preccurlyeq_\anc p$
  iff $t\preccurlyeq_\lca p$.
\end{proposition}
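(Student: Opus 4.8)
The plan is to close the cycle of implications. By Proposition~\ref{prop:1} we already have $t\preccurlyeq_\lca p \Rightarrow t\preccurlyeq_\anc p \Rightarrow t\preccurlyeq_\inj p$, so it suffices to prove the single converse implication $t\preccurlyeq_\inj p \Rightarrow t\preccurlyeq_\lca p$ under the hypothesis that $p$ has no descendant edges. Concretely, I would take an arbitrary weakly-injective embedding $h$ of $p$ in $t$ (one satisfying conditions $1$ through $4$ together with $5'$) and show that it automatically satisfies the lca-preservation condition $5'''$.

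The first step is a structural observation that exploits the absence of descendant edges. Since $\desc_p=\emptyset$, condition $3$ is vacuous and every edge of $p$ lies in $\child_p$; by condition $2$ each such edge is mapped to a child edge of $t$. Hence, writing $\mathit{path}_p(n)=(\root_p=v_0,v_1,\dots,v_k=n)$ for the unique root-to-$n$ path in $p$, the image sequence $(h(v_0),\dots,h(v_k))$ starts at $\root_t$ (condition $1$) and proceeds one child edge at a time, so it is a walk of strictly increasing depth in $t$, hence a simple path, and therefore \emph{the} unique root-to-$h(n)$ path. In particular $h$ preserves depth and maps each root-to-node path onto the corresponding root-to-image path.

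With this in hand I would analyze $\lca_t(h(n_1),h(n_2))$ by locating where the two image paths diverge. Let $m=\lca_p(n_1,n_2)$, so that $\mathit{path}_p(n_1)$ and $\mathit{path}_p(n_2)$ share exactly the prefix $\mathit{path}_p(m)$. If $m\in\{n_1,n_2\}$, then one path is a prefix of the other, whence one image is an ancestor of the other and the lca is forced to be $h(m)$. Otherwise the two paths leave $m$ through two \emph{distinct} children $a_1\neq a_2$ of $m$; the key step is that injectivity ($5'$) guarantees $h(a_1)\neq h(a_2)$, so the image paths, which both pass through $h(m)$ and then branch to the distinct children $h(a_1),h(a_2)$ of $h(m)$, diverge exactly at $h(m)$. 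In every case $\lca_t(h(n_1),h(n_2))=h(m)=h(\lca_p(n_1,n_2))$, establishing $5'''$.

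The only place where a hypothesis does real work is this last branching argument, and it is the step I expect to be the crux: without injectivity the two distinct children of $m$ could collapse onto the same node of $t$, pushing the image lca strictly below $h(m)$ and breaking $5'''$ --- which is precisely the phenomenon that descendant edges re-enable in the general case. Everything else is a routine induction on path length, so once the branching claim is pinned down the proof is complete.
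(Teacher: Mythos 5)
Your proposal is correct and follows essentially the same route as the paper: both reduce, via Proposition~\ref{prop:1}, to showing $t\preccurlyeq_\inj p \Rightarrow t\preccurlyeq_\lca p$, and both exploit that with only child edges an injective embedding rigidly copies the path structure of $p$ into $t$. The paper compresses this into the remark that $h$ is an isomorphism onto a substructure of $t$; your path-divergence argument at $\lca_p(n_1,n_2)$ just spells out the details that the paper leaves implicit.
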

Furthermore, if we consider path patterns, i.e., tree patterns whose
nodes have at most one child, there is no difference between any of
the injective embeddings and the standard embedding.

\begin{proposition}
\label{prop:6}
For any tree $t$ and any path pattern $p$, $t\preccurlyeq_\std p$ iff
$t\preccurlyeq_\inj p$ iff $t\preccurlyeq_\anc p$ iff
$t\preccurlyeq_\lca p$.
\end{proposition}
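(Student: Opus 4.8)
The plan is to leverage the hierarchy established in Proposition~\ref{prop:1}, which already supplies the chain of implications $t\preccurlyeq_\lca p \Rightarrow t\preccurlyeq_\anc p \Rightarrow t\preccurlyeq_\inj p \Rightarrow t\preccurlyeq_\std p$. Consequently it suffices to prove the single reverse implication $t\preccurlyeq_\std p \Rightarrow t\preccurlyeq_\lca p$, since this closes the loop and forces all four relations to coincide. I would begin by recalling that a path pattern is, by definition, a linear chain of nodes $n_0=\root_p, n_1, \ldots, n_k$ in which each $n_{i+1}$ is the unique child (either a $|$-child or a $\|$-child) of $n_i$; in particular $n_i$ is an ancestor of $n_j$ in $p$ precisely when $i \le j$, and $\lca_p(n_i,n_j)=n_{\min(i,j)}$.

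The crux of the argument is that any standard embedding $h$ of such a chain is forced to map its images into a totally ordered chain in $t$. Indeed, for each edge $(n_i,n_{i+1})$, condition~$2$ or condition~$3$ of the embedding guarantees that $(h(n_i),h(n_{i+1}))\in\child_t^+$, i.e., $h(n_i)$ is a \emph{proper} ancestor of $h(n_{i+1})$ irrespective of whether the edge is a child or a descendant edge. Composing these relations along the chain yields a strictly descending sequence $h(n_0),h(n_1),\ldots,h(n_k)$ lying on a single root-to-leaf path of $t$, so that $(h(n_i),h(n_j))\in\child_t^+$ whenever $i<j$.

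From this chain structure the three additional conditions follow by direct inspection, and the \emph{same} map $h$ witnesses all of them. Injectivity (condition~$5'$) is immediate because the sequence is strictly descending under proper ancestorship, so the images are pairwise distinct. For ancestor-preservation (condition~$5''$), $h(n_i)$ is an ancestor of $h(n_j)$ in $t$ iff $i \le j$ iff $n_i$ is an ancestor of $n_j$ in $p$. For lca-preservation (condition~$5'''$), the lowest common ancestor of two nodes lying on a single path is simply the shallower of the two, hence $\lca_t(h(n_i),h(n_j)) = h(n_{\min(i,j)}) = h(\lca_p(n_i,n_j))$. Thus $h$ is already lca-preserving, which establishes $t\preccurlyeq_\std p \Rightarrow t\preccurlyeq_\lca p$ and completes the equivalence.

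The argument has essentially no combinatorial obstacle: the linearity of a path pattern is precisely what collapses the distinctions between the embedding types, ruling out the branching and path-interleaving phenomena (illustrated in Fig.~\ref{fig:embeddings}) that separate the notions for general tree patterns. The only point that deserves a moment's care is the \emph{strictness} of the image chain---that a descendant edge is mapped to a \emph{proper} descendant---but this is guaranteed directly by the definition of descendant edges, and it is exactly what prevents two images from coinciding and what pins the lowest common ancestor down to one of the two given nodes.
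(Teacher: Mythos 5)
Your proposal is correct and follows essentially the same route as the paper: both observe that a standard embedding of a path pattern necessarily sends the chain of pattern nodes onto a strictly descending chain in $t$, so the embedding is already lca-preserving, and then Proposition~\ref{prop:1} closes the cycle of implications. Your explicit verification of injectivity and ancestor-preservation is redundant (they follow from lca-preservation via Proposition~\ref{prop:1}) but harmless.
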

\section{Complexity of injective embeddings}
\label{sec:complexity}
For a type of embedding $\theta\in\{\inj,\anc,\lca\}$ we define the
corresponding (unconstrained) decision problem:
\[
\mathcal{M}_\theta=\{(t,p) \mid t\preccurlyeq_\theta p\}.
\]
Additionally, we investigate several constrained variants of this
problem. First, we restrict the degree of nodes in the tree pattern
by a constant $k \geq 0$,
\[
\mathcal{M}_\theta^{\mathrm{D} \leq k}=\{(t,p) \mid t
\preccurlyeq_\theta p,\ \forall n\in N_p.\ \deg_p(n) \leq k
\}.
\]
Next, we define the restriction of the height of the tree pattern by a constant
$k \geq 0$,
\[
\mathcal{M}_\theta^{\mathrm{H} \leq k}=\{(t,p) \mid t
\preccurlyeq_\theta p,\ \height(p) \leq k \}.
\]
We also investigate the importance of labels in tree patterns as
opposed to those that are label-oblivious and query only the structure
of the tree, i.e., tree patterns that use $\wc$ only.
\[
\mathcal{M}_\theta^{\wc} =\{ (t,p) \mid t \preccurlyeq_\theta p,\
\forall n\in N_p.\ \lab_p(n)=\wc \}.
\]
It is also interesting to see if disallowing $\wc$ may change the
picture.
\[
\mathcal{M}_\theta^{\circ} =\{ (t,p) \mid t \preccurlyeq_\theta p,\
\forall n\in N_p.\ \lab_p(n)\neq\wc \}.
\]
Finally, we restrict the use of child and descendant edges in the tree
pattern. 
\[
\mathcal{M}_\theta^{|} =
\{ (t,p) \mid t \preccurlyeq_\theta p,\
\desc_p=\emptyset \}\quad\text{and}\quad
\mathcal{M}_\theta^{\|} =
  \{ (t,p) \mid t \preccurlyeq_\theta p,\
  \child_p=\emptyset \}.
\]
We make several general observations. First, we point out that the
conditions on the various injective embeddings can be easily verified
and every embedding is a mapping whose size is bounded by the size of
the tree pattern. Therefore,
\begin{proposition}
  \label{prop:2}
  $\mathcal{M}_\theta$, $\mathcal{M}_\theta^{\mathrm{D}\leq k}$,
  $\mathcal{M}_\theta^{\mathrm{H}\leq k}$, $\mathcal{M}_\theta^{\wc}$,
  $\mathcal{M}_\theta^{\circ}$, $\mathcal{M}_\theta^{|}$, and
  $\mathcal{M}_\theta^{\|}$ are in NP for any
  $\theta\in\{\inj,\anc,\lca\}$ and $k\geq 0$.
\end{proposition}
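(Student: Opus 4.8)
The plan is to exhibit, for each of the listed problems, a standard guess-and-verify procedure witnessing NP membership. A candidate witness is simply an embedding $h\colon N_p\to N_t$, represented by listing the image $h(n)\in N_t$ for each of the $|p|$ nodes of $p$; this representation has size $O(|p|\cdot\log|t|)$, hence polynomial in the input $(t,p)$. The nondeterministic step is therefore to guess such a function $h$, and the remaining work is to show that checking $h$ can be done deterministically in polynomial time.

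Next I would argue that verifying whether a guessed $h$ is a valid embedding of the required type is polynomial. Conditions $1$--$4$ defining a standard embedding are either local or involve only the edge relations: checking $h(\root_p)=\root_t$ is immediate, checking the child-edge condition $2$ and the label condition $4$ amounts to inspecting each edge and each node of $p$ once, and checking the descendant-edge condition $3$ requires testing membership in $(\child_t)^+$, which is decidable after precomputing the transitive closure of $\child_t$ by a polynomial-time reachability computation. The type-specific conditions are equally cheap: injectivity ($5'$) is an $O(|p|^2)$ pairwise comparison; the ancestor-preserving condition ($5''$) is checked by precomputing the ancestor relations $(\child_p\cup\desc_p)^\ast$ and $\child_t^\ast$ and testing the stated biconditional for all $O(|p|^2)$ pairs; and the lca-preserving condition ($5'''$) is checked by computing $\lca_p(n_1,n_2)$ and $\lca_t(h(n_1),h(n_2))$ for every pair, each such lowest common ancestor being obtainable in polynomial time once the ancestor relations are available.

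Finally, for the constrained variants I would observe that each side condition is a polynomial-time check on the pattern $p$ alone: bounding the degree $\deg_p(n)\le k$ of every node or the height $\height(p)\le k$, requiring every label to equal $\wc$ or to differ from $\wc$, and requiring $\desc_p=\emptyset$ or $\child_p=\emptyset$ are all verifiable by a single traversal of $p$. The verifier simply confirms the relevant constraint in addition to the embedding conditions, so the whole procedure remains polynomial, placing every listed problem in NP. There is no genuine obstacle here; the only point deserving care is to confirm that the relations requiring transitive closure — the descendant semantics of condition $3$, the ancestor relations in $5''$, and the lowest-common-ancestor computations in $5'''$ — are indeed computable in polynomial time, which follows from routine reachability computations on the finite tree $t$ and pattern $p$.
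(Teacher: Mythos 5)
Your proposal is correct and follows exactly the paper's (very brief) argument: the embedding itself is a polynomial-size certificate, and all defining conditions plus the side constraints are verifiable in polynomial time. You merely spell out the routine details that the paper leaves implicit.
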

By Prop.~\ref{prop:6}, for path patterns we employ the existing
polynomial algorithm~\cite{GoKoPi03}.
\begin{proposition}
  \label{prop:7}
  $\mathcal{M}_\theta^{D\leq1}$ is in P for any
  $\theta\in\{\inj,\anc,\lca\}$.
\end{proposition}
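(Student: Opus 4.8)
The plan is to reduce all three injective variants to the ordinary (standard) embedding and then to decide the latter with a simple dynamic program. The crucial observation is that the constraint $\deg_p(n)\leq 1$ for every $n\in N_p$ says exactly that $p$ is a path pattern in the sense used for Proposition~\ref{prop:6}. Hence that proposition applies and tells us that, for such $p$, the four relations $\preccurlyeq_\std$, $\preccurlyeq_\inj$, $\preccurlyeq_\anc$, and $\preccurlyeq_\lca$ all coincide. Consequently, independently of $\theta\in\{\inj,\anc,\lca\}$, we have
\[
\mathcal{M}_\theta^{\mathrm{D}\leq 1}=\{(t,p)\mid t\preccurlyeq_\std p \text{ and } \forall n\in N_p.\ \deg_p(n)\leq 1\},
\]
so it suffices to decide the \emph{standard} embedding of a path pattern in polynomial time.

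First I would lay out the pattern as a sequence $n_0=\root_p, n_1,\dots,n_m$, where each $n_{i+1}$ is the unique child of $n_i$ (well-defined since $\deg_p(n_i)\leq 1$), tagged according to whether $(n_i,n_{i+1})\in\child_p$ (a $|$-edge) or $(n_i,n_{i+1})\in\desc_p$ (a $\|$-edge). The core is a Boolean table $M[i][v]$, for $0\leq i\leq m$ and $v\in N_t$, whose intended meaning is: the suffix pattern $n_i,\dots,n_m$ admits a standard embedding into the subtree of $t$ rooted at $v$ that maps $n_i$ to $v$. Its value is determined by
\[
M[i][v] \;=\; \big(\lab_p(n_i)=\wc \,\vee\, \lab_t(v)=\lab_p(n_i)\big)\wedge C[i][v],
\]
where $C[m][v]=\true$ for the last node; for a $|$-edge $(n_i,n_{i+1})$ we set $C[i][v]=\true$ iff some $w$ with $(v,w)\in\child_t$ satisfies $M[i+1][w]$; and for a $\|$-edge we set $C[i][v]=\true$ iff some $w$ with $(v,w)\in(\child_t)^+$ satisfies $M[i+1][w]$.

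The correctness claim is that $t\preccurlyeq_\std p$ holds iff the root image is forced correctly and the suffix starting at $n_0$ fits, i.e.\ iff $M[0][\root_t]=\true$; this follows by induction on $m-i$, using condition~$1$ of the embedding definition for the root and conditions $2$--$4$ for the recurrence. The table has $O(|p|\cdot|t|)$ entries, and each entry is computed by scanning the children (resp.\ proper descendants) of $v$, so the whole computation is polynomial, e.g.\ $O(|p|\cdot|t|^2)$ with the naive descendant scan. I expect no genuinely hard step here: the only points requiring care are the \emph{proper}-descendant semantics of $\|$-edges (so the $\|$-case must range over strict descendants of $v$, never $v$ itself) and the fixed root-image condition, both of which are handled directly by the recurrence above. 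Alternatively, and more succinctly, one may simply invoke the existing polynomial evaluation algorithm for tree patterns of~\cite{GoKoPi03} on the path pattern $p$, which is exactly what the preceding remark does.
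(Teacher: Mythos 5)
Your proposal is correct and follows essentially the same route as the paper: observe that the degree-$\leq 1$ restriction makes $p$ a path pattern, invoke Proposition~\ref{prop:6} to collapse all three injective semantics to the standard one, and then decide standard embedding in polynomial time (the paper simply cites~\cite{GoKoPi03} for this last step, whereas you additionally spell out an explicit dynamic program, which is a harmless elaboration).
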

Finally, by Prop.~\ref{prop:4} and Thm.~\ref{thm:5}, which shows the
tractability of lca-preserving embeddings, we get the following.
\begin{proposition}
  \label{prop:8}
  $\mathcal{M}_\theta^{|}$ is in P for any
  $\theta\in\{\inj,\anc,\lca\}$.
\end{proposition}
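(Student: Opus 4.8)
The plan is to observe that the proposition follows immediately from two ingredients already established: the collapse of the three injective notions in the absence of descendant edges (Prop.~\ref{prop:4}), and the tractability of lca-preserving embeddings (Thm.~\ref{thm:5}). Concretely, given an instance $(t,p)$ with $\desc_p=\emptyset$, Prop.~\ref{prop:4} guarantees that $t\preccurlyeq_\inj p$, $t\preccurlyeq_\anc p$, and $t\preccurlyeq_\lca p$ all hold or all fail together. Hence the three sets $\mathcal{M}_\inj^{|}$, $\mathcal{M}_\anc^{|}$, and $\mathcal{M}_\lca^{|}$ coincide, and it suffices to decide $t\preccurlyeq_\lca p$.

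First I would note that whether a given input $(t,p)$ satisfies the syntactic side condition $\desc_p=\emptyset$ can be checked in time linear in $|p|$, so this part of the membership test contributes nothing to the complexity. Then I would invoke the polynomial-time algorithm for $\mathcal{M}_\lca$ supplied by Thm.~\ref{thm:5}: since $\mathcal{M}_\lca^{|}=\mathcal{M}_\lca\cap\{(t,p)\mid\desc_p=\emptyset\}$, restricting the domain of a problem decidable in polynomial time to a polynomial-time-recognizable subset of instances leaves it in P. Combining the two observations, each of $\mathcal{M}_\inj^{|}$, $\mathcal{M}_\anc^{|}$, and $\mathcal{M}_\lca^{|}$ is in P, which is exactly the claim for all $\theta\in\{\inj,\anc,\lca\}$.

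Since both cited results are assumed, there is no genuine obstacle in this argument; the only point requiring care is the direction in which Prop.~\ref{prop:4} is applied. The equivalence there must be used pointwise on each admissible instance rather than as a containment between the unconstrained problems, because the collapse is specifically a consequence of the hypothesis $\desc_p=\emptyset$ and need not hold for patterns with descendant edges. Once that is kept straight, the proposition is a direct corollary.
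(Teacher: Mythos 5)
Your proof is correct and follows exactly the route the paper itself takes: combine the collapse of the three injective notions for descendant-edge-free patterns (Prop.~\ref{prop:4}) with the polynomial-time algorithm for lca-preserving embeddings (Thm.~\ref{thm:5}). The additional remarks about checking the side condition $\desc_p=\emptyset$ and applying the equivalence instance-wise are sound but not needed beyond what the paper states.
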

\subsection{Weakly-injective embeddings}
\begin{theorem}
  \label{thm:6}
  $\mathcal{M}_\inj$ is NP-complete.
\end{theorem}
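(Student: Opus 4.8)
Since membership of $\mathcal{M}_\inj$ in NP is already given by Proposition~\ref{prop:2}, it remains only to establish NP-hardness, and the plan is to exhibit a polynomial-time many-one reduction from $3$-SAT to $\mathcal{M}_\inj$. Fix a formula $\varphi = C_1 \wedge \cdots \wedge C_m$ over variables $x_1,\dots,x_n$, each $C_j$ a disjunction of three literals. The guiding intuition is that for \emph{weakly}-injective embeddings the descendant edges impose essentially no vertical constraint: their images may interleave arbitrarily and ancestorship need not be preserved, so the sole genuine restriction is injectivity. Injectivity, however, is exactly a \emph{disjoint-usage} (packing) constraint on tree nodes. I will therefore build a tree $t_\varphi$ that offers only a bounded supply of ``resource'' nodes, and a pattern $p_\varphi$ whose branches are free, via descendant edges, to be placed almost anywhere below the root but which, to be completed, must compete for those scarce resources; injectivity then acts as the global disjointness condition that is satisfiable precisely when $\varphi$ has a satisfying valuation.

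Concretely, I would use two families of gadgets. For each variable $x_i$ a \emph{selection gadget} offers in $t_\varphi$ two alternative subtrees, one standing for $x_i=\true$ and one for $x_i=\false$; the matching subpattern of $p_\varphi$ must embed into one of them, and injectivity together with the rigid child-edges prevents it from straddling both, so the choice fixes a truth value and determines which of the literal-resource nodes for $x_i$ and for $\neg x_i$ are left free. For each clause $C_j$ a \emph{verification gadget} in $p_\varphi$ is a node with three descendant-children, one per literal of $C_j$; each child carries a label (or rigid child structure) forcing it onto a resource node of the matching literal, and it can land there only if that node was left free by the variable gadgets. Because the descendant edges grant full freedom in \emph{where} each clause branch is attempted, the whole pattern embeds injectively exactly when every clause can claim at least one of its three literal resources, i.e.\ when the valuation satisfies $\varphi$.

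Correctness then splits into two directions. The ``if'' direction is routine: from a satisfying valuation I read off the configuration of each selection gadget, route each verification gadget to a satisfied literal, and verify conditions $1$--$4$ and injectivity by construction. The ``only if'' direction is where the real difficulty lies, and I expect it to be the main obstacle: given an \emph{arbitrary} weakly-injective embedding $h$ of $p_\varphi$ in $t_\varphi$, I must show it respects the intended gadget semantics even though $h$ may route descendant edges along interleaving paths and is not required to preserve ancestorship. The argument must pin the image of each gadget down from its rigid child-edges and labels, then use a counting/pigeonhole argument over the limited resource nodes, combined with injectivity, to force each selection gadget into a single consistent configuration and to exclude ``cheating'' embeddings that try to reuse a resource or borrow a node across gadgets. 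Making these forced choices airtight while keeping $|t_\varphi|$ and $|p_\varphi|$ polynomial in $|\varphi|$ is the delicate point: the very freedom of descendant edges that makes weakly-injective embeddings permissive is what must be neutralized by a tight resource budget. (A reduction from an explicit packing problem such as $3$-dimensional matching would follow the same template, with the ``only if'' direction again the crux.)
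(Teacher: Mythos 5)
Your overall strategy --- a reduction from (3-)SAT in which injectivity acts as a global disjoint-usage constraint, with per-variable branches in $t_\varphi$ that a selection subpattern must consume and per-clause subpatterns that compete for the leftover literal nodes --- is exactly the strategy of the paper's proof. However, the two points you leave open or get wrong are precisely where the content of that proof lies. First, your clause gadget is ``a node with three descendant-children, one per literal of $C_j$''; since every pattern node must be mapped somewhere, this forces \emph{all three} literal resources to be available and therefore encodes a conjunction, not a disjunction. The paper instead uses a \emph{single} pattern node $[.//c_j]$ per clause and realizes the disjunction on the tree side, by placing a $c_j$-labelled node in every variable branch corresponding to a literal occurring in $c_j$; the lone descendant edge then lets that one node land on any satisfied literal.

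Second, you explicitly defer the ``only if'' direction --- forcing an arbitrary weakly-injective embedding to respect the gadget semantics --- and offer no mechanism for it. The paper's mechanism is concrete and simple: for $\varphi$ with $k$ clauses, each branch $X_i$, $\bar{X}_i$ of $t_\varphi$ is a \emph{path} of exactly $k+1$ nodes, and the selection subpattern $Y_i = x_i/\wc/\cdots/\wc$ is likewise a path of exactly $k+1$ nodes joined by \emph{child} edges. Since $x_i$-labelled nodes occur only at the tops of $X_i$ and $\bar{X}_i$, and the $k$ child edges then fill the entire branch below, $Y_i$ necessarily consumes one whole branch; injectivity then immediately forces any $[.//c_j]$ mapped into the other branch to witness a satisfied literal. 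Without supplying such a forcing device (and repairing the clause gadget), your reduction remains a plan rather than a proof, so the argument as written has a genuine gap.
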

\begin{proof}
  We reduce SAT to $\mathcal{M}_\inj$. We take a CNF formula
  $\varphi=c_1\land\dots\land c_k$ over the variables $x_1,\ldots,x_n$
  and for every variable $x_i$ we construct two (linear) trees
  $X_i=x_i(\pi_1(\pi_2(\ldots\pi_{k-1}(\pi_k)\ldots)))$ and
  $\bar{X}_i=x_i(\bar\pi_1(\bar\pi_2
  (\ldots\bar\pi_{k-1}(\bar\pi_k)\ldots)))$, where $\pi_j=c_j$ if the
  clause $c_j$ uses the literal $x_i$ and $\pi_j=\bot$ otherwise, and
  analogously, $\bar\pi_j=c_j$ if the clause $c_j$ uses the literal
  $\lnot x_i$ and $\bar\pi_j=\bot$ otherwise. The constructed tree is
  \[
  t_\varphi = r(X_1,\bar{X}_1,X_2,\bar{X}_2,\ldots, X_n,\bar{X}_n)
  \]
  and the constructed tree pattern is 
  \[
  p_\varphi = r[.//Y_1][.//Y_2]\dots[.//Y_n]
  [.//c_1][.//c_2]\ldots[.//c_k],
  \]
  where $Y_i = x_i/\wc/\wc/\ldots/\wc$ with exactly $k$ repetitions of
  $\wc$.   
  \begin{figure}[htb]
    \centering
    \begin{tikzpicture}[semithick,scale=0.75]
      \node at (-2,0) {$p_\varphi$};
      \node at (0, 0) (n0)  {$r$};
      \node at (0.5,-1) (n1) {$c_1$} edge[-,double] (n0);
      \node at (1.5,-1) (n2) {$c_2$} edge[-,double] (n0);
      \node at (2.5,-1) (n3) {$c_3$} edge[-,double] (n0);
      \node at (-2.5,-1) (n41) {$x_1$} edge[-,double] (n0);
      \node at (-2.5,-2) (n42) {$\wc$} edge[-] (n41);
      \node at (-2.5,-3) (n43) {$\wc$} edge[-] (n42);
      \node at (-2.5,-4) (n44) {$\wc$} edge[-] (n43);
      \node at (-1.5,-1) (n51) {$x_2$} edge[-,double] (n0);
      \node at (-1.5,-2) (n52) {$\wc$} edge[-] (n51);
      \node at (-1.5,-3) (n53) {$\wc$} edge[-] (n52);
      \node at (-1.5,-4) (n54) {$\wc$} edge[-] (n53);
      \node at (-0.5,-1) (n61) {$x_3$} edge[-,double] (n0);
      \node at (-0.5,-2) (n62) {$\wc$} edge[-] (n61);
      \node at (-0.5,-3) (n63) {$\wc$} edge[-] (n62);
      \node at (-0.5,-4) (n64) {$\wc$} edge[-] (n63);

      \begin{scope}[xshift=-6.5cm]
      \node at (-2,0) {$t_\varphi$};
      \node at (0, 0) (n0)  {$r$};
      \node at (-2.5,-1) (n41) {$x_1$} edge[-] (n0);
      \node at (-2.5,-2) (n42) {$c_1$} edge[-] (n41);
      \node at (-2.5,-3) (n43) {$c_2$} edge[-] (n42);
      \node at (-2.5,-4) (n44) {$\bot$}edge[-] (n43);
      \node at (-1.5,-1) (n51) {$x_1$} edge[-] (n0);
      \node at (-1.5,-2) (n52) {$\bot$}edge[-] (n51);
      \node at (-1.5,-3) (n53) {$\bot$}edge[-] (n52);
      \node at (-1.5,-4) (n54) {$c_3$} edge[-] (n53);      
      \node at (-0.5,-1) (n61) {$x_2$} edge[-] (n0);
      \node at (-0.5,-2) (n62) {$\bot$}edge[-] (n61);
      \node at (-0.5,-3) (n63) {$\bot$}edge[-] (n62);
      \node at (-0.5,-4) (n64) {$\bot$}edge[-] (n63);
      \node at (+0.5,-1) (n71) {$x_2$} edge[-] (n0);
      \node at (+0.5,-2) (n72) {$\bot$}edge[-] (n71);
      \node at (+0.5,-3) (n73) {$c_2$} edge[-] (n72);
      \node at (+0.5,-4) (n74) {$c_3$} edge[-] (n73);
      \node at (+1.5,-1) (n81) {$x_3$} edge[-] (n0);
      \node at (+1.5,-2) (n82) {$\bot$}edge[-] (n81);
      \node at (+1.5,-3) (n83) {$c_2$} edge[-] (n82);
      \node at (+1.5,-4) (n84) {$\bot$}edge[-] (n83);
      \node at (+2.5,-1) (n91) {$x_3$} edge[-] (n0);
      \node at (+2.5,-2) (n92) {$c_1$}edge[-] (n91);
      \node at (+2.5,-3) (n93) {$\bot$}edge[-] (n92);
      \node at (+2.5,-4) (n94) {$\bot$}edge[-] (n93);
      \end{scope}
    \end{tikzpicture}
    \caption{Reduction to $\mathcal{M}_\inj$ for
      $\varphi=(x_1\lor \lnot x_3) \land
      (x_1\lor\lnot x_2\lor x_3) \land (\neg x_1 \lor \lnot x_2)$.}
    \label{fig:reduction-inj}
  \end{figure}
  Figure~\ref{fig:reduction-inj} illustrates the reduction for
  $\varphi=(x_1\lor \lnot x_3) \land (x_1\lor\lnot x_2\lor x_3) \land
  (\neg x_1 \lor \lnot x_2)$. We claim that
  \[
  (t_\varphi,p_\varphi) \in \mathcal{M}_\inj \iff  \varphi \in \mathrm{SAT}.
  \]
  For the \emph{if} part, we take a valuation $V$ satisfying $\varphi$
  and construct a weakly-injective embedding $h$ as follows. The fragment
  $[.//Y_i]$ is mapped to $\bar{X}_i$ if $V(x_i)=\true$ and to $X_i$ if
  $V(x_i)=\false$. For each clause $c_j$ we pick one literal satisfied
  by $V$ and w.l.o.g.\ assume it is $x_i$, i.e., $c_j$ uses $x_i$ and
  $V(x_i)=\true$. Then, the embedding $h$ maps the fragment $[.//c_j]$
  to the node $c_j$ in the tree fragment $X_i$. Clearly, the
  constructed embedding is an injective function.

  For the \emph{only if} part, we take a weakly-injective embedding
  $h$ and construct a satisfying valuation $V$ as follows. If the
  fragment $[.//Y_i]$ is mapped to $X_i$, then $V(x_i)=\false$ and if
  $[.//Y_i]$ is mapped to $\bar{X}_i$, then $V(x_i)=\true$. To show
  that $\varphi$ is satisfied by $V$ we take any clause $c_j$ and
  check where $h$ maps the fragment $[.//c_j]$. W.l.o.g.\ assume that
  it is $X_i$ and since $h$ is weakly-injective, $Y_i$ is mapped to
  $\bar{X}_i$, and consequently, $V(x_i)=\true$. Hence, $V$ satisfies
  $c_j$.  \qed
\end{proof}
We observe that in the reduction above the use of the child edges in
the tree pattern is not essential and they can be replaced by
descendant edges.
\begin{corollary}
  \label{cor:3}
  $\mathcal{M}_{\inj}^{\|}$ is NP-complete.
\end{corollary}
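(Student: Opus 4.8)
The plan is to note that membership in NP is already delivered by Prop.~\ref{prop:2} (which explicitly lists $\mathcal{M}_\theta^{\|}$), so only NP-hardness needs an argument. For that I would reuse the SAT reduction of Thm.~\ref{thm:6} almost verbatim, changing only the internal edges of the variable gadgets so that the resulting pattern contains no child edges whatsoever.

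Concretely, I would keep the tree $t_\varphi=r(X_1,\bar X_1,\ldots,X_n,\bar X_n)$ unchanged and replace each gadget $Y_i=x_i/\wc/\wc/\ldots/\wc$ (with $k$ wildcards joined by child edges) by $Y_i'=x_i//\wc//\ldots//\wc$, again with exactly $k$ wildcards but now joined by descendant edges. In the new pattern $p_\varphi'=r[.//Y_1']\cdots[.//Y_n'][.//c_1]\cdots[.//c_k]$ the root was already attached to every gadget and every clause node by descendant edges, so after this single change we have $\child_{p_\varphi'}=\emptyset$; that is, $(t_\varphi,p_\varphi')$ is a legal instance of $\mathcal{M}_\inj^{\|}$, and the construction is clearly polynomial.

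The correctness equivalence $(t_\varphi,p_\varphi')\in\mathcal{M}_\inj^{\|}\iff\varphi\in\mathrm{SAT}$ I would argue exactly as in Thm.~\ref{thm:6}. The \emph{if} direction is immediate, since every child edge used by the old embedding is in particular a descendant edge, so the same injective map $h$ still witnesses $t_\varphi\preccurlyeq_\inj p_\varphi'$. The only point that genuinely needs rechecking is the \emph{only if} direction, which rests on the fact that whenever $Y_i'$ is mapped into a branch $X_i$ (or $\bar X_i$) it occupies that branch \emph{entirely}; this is exactly where I expect the main (though modest) obstacle to lie, since relaxing child edges to descendant edges could a priori let the gadget ``float'' and leave branch nodes free, breaking the injectivity argument.

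I would close this gap with a \emph{length-forcing} observation. The only nodes of $t_\varphi$ labelled $x_i$ are the tops of the two branches $X_i$ and $\bar X_i$, so the image of the $x_i$-node of $Y_i'$ is pinned to one of these two branch tops; and below such a top the branch is a single path of exactly $k$ proper descendants $\pi_1,\ldots,\pi_k$. A chain of $k$ wildcards linked by descendant edges and rooted at such a node must map to a strictly descending chain of $k$ nodes inside that path, and since the path contains exactly $k$ nodes the chain must be all of them. Hence $Y_i'$ uses every node of its branch, precisely as $Y_i$ did, and the correspondence between gadget placement, clause-node placement, and truth values transfers unchanged: if $h$ sends $[.//c_j]$ into $X_i$ then injectivity forces $Y_i'$ into $\bar X_i$, whence $V(x_i)=\true$ and the literal $x_i$ satisfies $c_j$. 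This completes the reduction.
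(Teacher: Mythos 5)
Your proposal is correct and matches the paper's (implicit) argument exactly: the paper justifies this corollary with the single remark that in the reduction of Thm.~\ref{thm:6} the child edges inside the $Y_i$ gadgets can be replaced by descendant edges. Your length-forcing observation is precisely the detail needed to see that the gadget still occupies its entire branch, so the equivalence with SAT carries over unchanged.
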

Furthermore, the proof of Thm.~\ref{thm:6} can be easily adapted to
the bounded degree setting. Indeed, one can easily show that for any
tree $t=r(t_1,\ldots,t_k)$ and any tree pattern
$p=r[.//p_1]\dots[.//p_m]$, $t\preccurlyeq_\inj p$ if and only if
$t'\preccurlyeq_\inj p'$, where
$t'=A_1(\ldots{}A_m(t_1,\ldots,t_k)\dots)$,
$p'=A_1[.//p_1]/\ldots/A_m[.//p_m]$, and $A_1,\ldots,A_m$ are new
symbols not used in $p$. This observation, when applied to the tree
pattern in the reduction above, allows to reduce the degree of the
root node and to obtain a tree pattern of degree bounded by $2$. Note,
however, that this technique does not allow to reduce the degree of
nodes in arbitrary tree patterns.
\begin{corollary}
  \label{cor:1}
  $\mathcal{M}_\inj^{\mathrm{D}\leq k}$ is NP-complete for any
  $k\geq2$.
\end{corollary}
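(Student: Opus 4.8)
The plan is to combine the SAT reduction of Theorem~\ref{thm:6} with the degree-reduction observation stated immediately above, and to verify that observation in detail. Membership in NP is immediate from Proposition~\ref{prop:2}, so only NP-hardness requires work, and it suffices to establish it for $k=2$: since every tree pattern of degree at most $2$ is also a legal instance of $\mathcal{M}_\inj^{\mathrm{D}\le k}$ for every $k\ge2$, hardness for $k=2$ yields hardness for all $k\ge2$.

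First I would inspect the pattern produced by the reduction of Theorem~\ref{thm:6}. There, $p_\varphi=r[.//Y_1]\dots[.//Y_n][.//c_1]\dots[.//c_k]$ has the special shape $r[.//p_1]\dots[.//p_m]$ with $m=n+k$, its root has degree $m$, and every other node has degree at most $1$ (the $Y_i$ are paths and each $c_j$ is a single node). The tree $t_\varphi=r(X_1,\bar X_1,\dots,X_n,\bar X_n)$ has the matching shape $r(t_1,\dots,t_{2n})$. Hence the only node violating the degree-$2$ bound is the root, and I would apply the observation exactly to this root: replace $t_\varphi$ by $t'=A_1(\dots A_m(X_1,\bar X_1,\dots,X_n,\bar X_n)\dots)$ and $p_\varphi$ by $p'=A_1[.//p_1]/\dots/A_m[.//p_m]$, where $A_1,\dots,A_m$ are fresh labels. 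In $p'$ each spine node $A_i$ has at most two children — its successor $A_{i+1}$ via a child edge and the subtree $p_i$ via a descendant edge — while the $p_i$ retain degree at most $1$, so $p'$ has degree at most $2$. Both transformations are clearly polynomial, so the composition with the reduction of Theorem~\ref{thm:6} is a polynomial-time reduction from SAT.

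The hard part is verifying the claimed equivalence $t\preccurlyeq_\inj p\iff t'\preccurlyeq_\inj p'$ that underlies the observation. For the direction from $p$ to $p'$, given a weakly-injective $h$ I would keep the images of the subtrees $p_i$ unchanged and set $h'(A_i)=A_i$ for each $i$; since in $t'$ every spine node is a proper ancestor of all the original subtrees, the top of each $p_i$ — an image of a $\|$-child of the root, hence a proper descendant of $r$ in $t$, living inside some original subtree — is a proper descendant of $A_i$ in $t'$, so every descendant edge is satisfied, and injectivity is preserved because the spine images carry fresh labels disjoint from the subtree images. The converse is the delicate direction, and it is where the fresh labels do the real work: in any weakly-injective embedding $h'$ of $p'$, each $A_i$ must map to the unique node of $t'$ labelled $A_i$, which pins the whole spine onto the spine; consequently each $p_i$, whose top carries an original label from $\Sigma$, cannot be mapped to any spine node and must land inside the original forest $t_1,\dots,t_k$. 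I would then define $h$ on $p$ by sending $\root_p$ to $\root_t$ and reusing $h'$ on every $p_i$; the descendant edges hold because each $p_i$-image sits strictly below $r$ in $t$, and injectivity is inherited from $h'$. The main obstacle is precisely this label-forcing argument together with the bookkeeping that shows the descendant-edge semantics survive the insertion of the spine, after which correctness of the whole reduction follows from Theorem~\ref{thm:6}.
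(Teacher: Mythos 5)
Your proposal is correct and follows exactly the paper's route: it instantiates the degree-reduction observation stated just before the corollary (replacing the root of $p_\varphi$ and $t_\varphi$ by a spine $A_1/\dots/A_m$ of fresh labels, each spine node carrying one descendant-subtree) and applies it to the SAT reduction of Theorem~\ref{thm:6}. Your verification of the equivalence $t\preccurlyeq_\inj p\iff t'\preccurlyeq_\inj p'$, including the label-forcing argument pinning the spine and the observation that the tops of the $Y_i$ and $c_j$ fragments carry $\Sigma$-labels and so must land in the original forest, correctly fills in the details the paper leaves as ``one can easily show.''
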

A reduction similar to the one presented above can be used to
construct patterns whose height is exactly $2$.
\begin{theorem}
  \label{thm:4}
  $\mathcal{M}_\inj^{\mathrm{H}\leq k}$ is NP-complete for any $k\geq
  2$.
\end{theorem}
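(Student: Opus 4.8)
The plan is to reduce SAT to $\mathcal{M}_\inj^{\mathrm{H}\leq 2}$ by recycling the idea of Thm.~\ref{thm:6}, but replacing the linear \emph{chain} gadgets by depth-$2$ \emph{star} gadgets, so that the constructed pattern has height $2$ rather than height linear in the number of clauses. Since every instance produced this way is simultaneously an instance of $\mathcal{M}_\inj^{\mathrm{H}\leq k}$ for every $k\geq 2$, and since membership in NP is already given by Prop.~\ref{prop:2}, a single reduction to height-$2$ patterns establishes NP-completeness for all $k\geq 2$ at once.

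Concretely, given a CNF formula $\varphi=c_1\land\dots\land c_m$ over $x_1,\dots,x_n$, I would build the tree $t_\varphi=r(T_1,F_1,\dots,T_n,F_n)$, where each block $T_i$ is a star whose root is labeled $x_i$ and whose $j$-th child (for $1\le j\le m$) is labeled $c_j$ if the clause $c_j$ uses the literal $x_i$, and $\bot$ otherwise; the block $F_i$ is defined symmetrically using the literal $\lnot x_i$. Thus each block has its $x_i$-root at depth $1$ and exactly $m$ leaves at depth $2$. The pattern is
\[
p_\varphi = r[.//Y_1]\dots[.//Y_n][.//c_1]\dots[.//c_m],
\]
where now $Y_i$ is the star consisting of an $x_i$-labeled node with exactly $m$ wildcard $|$-children. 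By construction $\height(p_\varphi)=2$.

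The heart of the argument is the counting step that forces each $Y_i$ to occupy an entire block, and this is the part I expect to require the most care. Because the root of $Y_i$ is labeled $x_i$, it can only be mapped to the root of $T_i$ or of $F_i$; because its $m$ wildcards hang off it by child edges, they must be mapped \emph{injectively} to children of that image; and since each block has exactly $m$ children, this is necessarily a bijection onto the whole block. Hence every node of the chosen block is consumed by $Y_i$, so no $c_j$-leaf inside it remains free. This is precisely the role played in Thm.~\ref{thm:6} by the length-$m$ path $Y_i=x_i/\wc^m$, now realized in height $2$; checking that the exact child count $m$ is what upgrades injectivity into ``occupies the whole block'' (neither failing for lack of children nor leaving a $c_j$-leaf free) is the main point to get right.

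Granting this, correctness mirrors Thm.~\ref{thm:6}. From a satisfying valuation $V$ I map $Y_i$ to $F_i$ when $V(x_i)=\true$ and to $T_i$ when $V(x_i)=\false$; then for each clause I pick a satisfying literal, say $x_i$ with $V(x_i)=\true$, and map $[.//c_j]$ to the $c_j$-leaf of the now-free block $T_i$, obtaining an injective $h$. Conversely, from a weakly-injective embedding $h$ I set $V(x_i)=\true$ iff $Y_i$ occupies $F_i$; for each clause I look at the block containing the $c_j$-leaf hit by $[.//c_j]$, and since that leaf is not consumed by the corresponding $Y_i$, the block occupied by $Y_i$ must be the opposite one, which by the definition of $V$ makes the literal witnessed by that leaf true and hence satisfies $c_j$. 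Finally I would note that every instance $(t_\varphi,p_\varphi)$ has $\height(p_\varphi)=2\le k$, so the reduction witnesses NP-hardness of $\mathcal{M}_\inj^{\mathrm{H}\leq k}$ for every $k\geq 2$, completing the proof together with Prop.~\ref{prop:2}.
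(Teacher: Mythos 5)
Your reduction is correct, and it follows the same overall strategy as the paper's proof: reduce from SAT with height-$2$ variable gadgets that are forced, by an exact child count, to consume one of two blocks in the tree, thereby blocking the clause leaves of that block. The key counting step you flag --- $m$ wildcard $|$-children of the $x_i$-labeled node must map injectively into the exactly $m$ children of a block root, hence bijectively onto the whole block --- is sound, and both directions of the equivalence go through as you describe. Where you differ from the paper is in the gadget design: you anchor each variable gadget by labeling its root $x_i$ and place the two blocks $T_i$, $F_i$ as siblings under the tree root, whereas the paper labels the internal nodes of both the pattern and the tree with wildcards (resp.\ $a$), anchors each gadget instead by a distinguished leaf label $s_i$ reached via a descendant edge, and nests the ``negative'' block $x_i^n$ inside the ``positive'' block $x_i^p$ (with child counts $k+1$ versus $k+2$ arranged so that either choice of image consumes exactly the right set of clause leaves). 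Your construction is simpler and entirely adequate for Theorem~\ref{thm:4}; the paper's extra machinery pays off later, since having all non-wildcard labels at leaves is exactly what lets the proof of Theorem~\ref{thm:7} replace those labeled leaves by unlabeled counting gadgets $T^k_s$ to get hardness for $\wc$-only patterns, an adaptation your internally $x_i$-labeled gadget would not support directly.
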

If we consider patterns of depth $1$, where the children of the root
node are leaves, then a diligent counting technique suffices to solve
the problem.
\begin{proposition}
  \label{prop:3}
  $\mathcal{M}_\inj^{\mathrm{H}\leq 1}$ is in P.
\end{proposition}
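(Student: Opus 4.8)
The plan is to reduce the problem to bipartite matching. First I would analyze the shape of a pattern $p$ with $\height(p)\le 1$: it consists of the root $\root_p$ together with a set of leaves $\ell_1,\dots,\ell_m$, each of which is either a $|$-child or a $\|$-child of the root. Any embedding $h$ must send $\root_p$ to $\root_t$ by condition $1$, so in particular $\lab_p(\root_p)$ must agree with $\lab_t(\root_t)$ (unless $\lab_p(\root_p)=\wc$); if this fails we reject at once. It then remains only to place the leaves.

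Next I would record, for each leaf $\ell_i$, the set $S_i\subseteq N_t$ of nodes to which it may legally be mapped. Using the edge conditions and the fact that every non-root node of $t$ is a proper descendant of $\root_t$, this set depends only on the edge type and label of $\ell_i$: if $\ell_i$ is a $|$-child then $S_i$ is the set of children of $\root_t$ whose label matches $\lab_p(\ell_i)$, and if $\ell_i$ is a $\|$-child then $S_i$ is the set of all non-root nodes of $t$ whose label matches $\lab_p(\ell_i)$ (where ``matches'' means ``equal to $\lab_p(\ell_i)$'', and is vacuous when $\lab_p(\ell_i)=\wc$). The only remaining requirement is weak-injectivity, condition $5'$, namely that the images $h(\ell_1),\dots,h(\ell_m)$ be pairwise distinct. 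Since each such image lies in $N_t\setminus\{\root_t\}$ while $h(\root_p)=\root_t$, the root image is automatically disjoint from the leaf images, so injectivity reduces exactly to choosing \emph{distinct} targets for the leaves.

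Consequently I would build the bipartite graph $G$ with the leaves $\ell_1,\dots,\ell_m$ on one side, the nodes of $t$ on the other, and an edge $\{\ell_i,v\}$ whenever $v\in S_i$. The key claim is that $t\preccurlyeq_\inj p$ iff $G$ admits a matching saturating every $\ell_i$: a saturating matching is precisely a choice of distinct, label- and edge-compatible images of the leaves, which together with $h(\root_p)=\root_t$ constitutes a weakly-injective embedding; conversely, every such embedding restricts to a saturating matching. A maximum matching is computable in polynomial time (e.g.\ by Hopcroft--Karp), and we accept iff its size is $m$, giving the desired polynomial algorithm.

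The part needing care is the correctness of this equivalence, and specifically verifying that a height-$1$ pattern hides no further constraints: the leaves are pairwise incomparable in $p$, so there are no ancestor relations among them to be respected, and each $\|$-child may independently be sent to any non-root node; this is exactly why each $S_i$ depends on the single leaf $\ell_i$ alone. I expect this verification to be routine. I also note that since $\Sigma$ is fixed and finite, the leaves and the target nodes fall into only boundedly many classes determined by label and edge type, so the matching can be replaced by a direct feasibility test on the class cardinalities — a small transportation problem decided by Hall-type counting, which is the ``diligent counting'' alluded to; either route yields a polynomial-time decision procedure.
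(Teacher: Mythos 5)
Your reduction is correct, but it takes a different route from the paper. The paper's proof is the pure ``diligent counting'' argument: it partitions the leaves by label and edge type into the quantities $p_a^{|}$ and $p_a^{\|}$, describes a greedy allocation strategy (matched $|$-children to depth-$1$ nodes first, then $\|$-children to deeper nodes, spilling over to depth $1$ if needed, then the $\wc$-labeled leaves to whatever remains), and asserts that this strategy succeeds iff four explicit cardinality inequalities hold, all checkable in polynomial time. You instead formulate the problem as finding a system of distinct representatives for the sets $S_i$ of admissible images of the leaves and solve it by bipartite matching, relegating the counting argument to a closing remark. Your route is sound: for a height-$1$ pattern the only constraint beyond the standard embedding conditions is that the leaf images be pairwise distinct non-root nodes, each $S_i$ genuinely depends on $\ell_i$ alone, and a saturating matching is exactly such a choice; Hopcroft--Karp then gives polynomial time. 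What your approach buys is a correctness argument that is essentially immediate --- you never have to justify that a particular greedy order is optimal or that a hand-crafted list of inequalities such as the paper's $p_{\wc}^{|} \leq \sum_{a}(t_a^{{}=1}-p_a^{|}-\min(p_a^{\|}-t_a^{{}\geq 2},0))$ is both necessary and sufficient --- and it works verbatim even when $\Sigma$ is part of the input rather than fixed. What the paper's approach buys is a lighter algorithm (a constant number of counts and comparisons rather than a matching computation) and an explicit combinatorial characterization. Both are valid polynomial-time procedures for $\mathcal{M}_\inj^{\mathrm{H}\leq 1}$.
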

\begin{proof}
  Fix a tree pattern $p$ whose depth is $1$ and a tree $t$. For
  $a\in\Sigma\cup\{\wc\}$ we denote by $\smash{p_a^|}$ the number of
  $a$-labeled $|$-children of $\root_p$, by $\smash{p_a^{\|}}$ the
  number of $a$-labeled $\|$-children of $\root_p$, and by
  $\smash{t_a^{{}=i}}$ and $\smash{t_a^{{}\geq i}}$ the numbers of
  $a$-labeled nodes of $t$ at depths ${}=i$ and ${}\geq i$ resp. 

  We attempt to construct a weakly-injective embedding of $p$ to $t$
  using the following strategy: (1) we map the nodes of
  $\smash{p_a^|}$ to nodes of $\smash{t_a^{{}=1}}$, (2) we map the
  nodes of $\smash{p_a^{\|}}$ to nodes of $\smash{t_a^{{}\geq2}}$ and
  if $\smash{p_a^{\|} > t_a^{{}\geq2}}$, we map the remaining
  $\smash{p_a^{\|}-t_a^{{}\geq2}}$ nodes to the nodes of
  $\smash{t_a^{{}=1}}$, (3) we map the nodes of $\smash{p_{\wc}^|}$ to
  the remaining nodes of $t$ at depth $1$, and (4) we map the nodes of
  $\smash{p_{\wc}^{\|}}$ to the remaining nodes of $t$.

  Clearly, this procedure succeeds and a weakly-injective embedding
  can be constructed if and only if the following inequalities are
  satisfied:
  \begin{align}
    &\smash{p_a^| \leq t_a^{{}=1}}&
    &\text{for $a\in\Sigma$,}\\
    &\smash{p_a^{\|} \leq t_a^{{}\geq1} - p_a^{|}}&
    &\text{for $a\in\Sigma$,}\\
    &\smash{p_{\wc}^| \leq \textstyle{\sum_{a\in\Sigma}}
      (t_a^{{}=1} - p_a^{|} - \min(p_a^{\|}-t_a^{{}\geq2},0)}),\\
    &\smash{p_{\wc}^{\|} \leq 
      \Big[\textstyle{\sum_{a\in\Sigma}}
        (t_a^{{}\geq1} - p_a^{|} - p_a^{\|})\Big]
      -p_{\wc}^{|}}.
  \end{align}
  Naturally, these inequalities can be verified in polynomial
  time. \qed
\end{proof}
Finally, we observe that while in the reductions above we use
different labels to represent elements of a finite enumerable set, the
same can be accomplished with patterns using $\wc$ labels only, where
natural numbers are encoded with simple gadgets. The gadgets use the
fact that a node of a tree pattern that has $k$ $|$-children can be
mapped by a weakly-injective embedding only to a node having at least
$k$ nodes. On the other hand, we can easily modify reduction from
Thm.~\ref{thm:6} yield tree patterns without $\wc$ nodes. 
\begin{theorem}
  \label{thm:7}
  $\mathcal{M}_\inj^{\wc}$ and $\mathcal{M}_\inj^\circ$ are  NP-complete.
\end{theorem}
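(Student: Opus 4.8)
The plan is to show NP-hardness of each problem---membership in NP is already given by Proposition~\ref{prop:2}---by reworking the SAT reduction of Theorem~\ref{thm:6}, attacking $\mathcal{M}_\inj^\circ$ first and then bootstrapping the wildcard-only case from it.

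For $\mathcal{M}_\inj^\circ$ the only wildcards in the reduction lie on the occupancy paths $Y_i$, whose role is to let one pattern branch occupy \emph{either} $X_i$ or $\bar X_i$ and thereby block the clause nodes inside the occupied branch. I would keep the global shape---a root with one descendant-attached gadget per variable and one descendant-attached query per clause---but replace each wildcard path $Y_i$ by a node (labelled $x_i$) whose $|$-children are clause markers labelled $c_j$, one for every clause mentioning $x_i$ in \emph{either} polarity. In the tree I would give $X_i$ and $\bar X_i$ markers for exactly the same set of clauses, so that their child profiles coincide and $Y_i$ can still be sent to either branch; the polarity would be recorded by declaring a marker \emph{genuine} in the branch where the literal is true and a \emph{decoy} in the other. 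To let only the clause query tell these apart, I would hang one extra child labelled by a fresh symbol $\#$ under every genuine marker and set the clause query to the two-node pattern $c_j/\#$. Then $Y_i$ consumes all markers of the branch it occupies (genuine or decoy), whereas $c_j/\#$ can land only on a surviving genuine marker, and the argument ``$V$ satisfies $c_j$ iff some genuine $c_j$-marker is free'' goes through exactly as in Theorem~\ref{thm:6}.

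For $\mathcal{M}_\inj^\wc$ I would take the fixed-alphabet instance just built and simulate its finitely many labels purely by structure, using the fact that a pattern node with $m$ leaf $|$-children embeds injectively only into a tree node of out-degree at least $m$. I would assign each label a distinct code $a\in\{1,\dots,s\}$ (the root needs none, since $h(\root_p)=\root_t$ is forced anyway) and replace every labelled node, in both tree and pattern, by a $\wc$-node carrying a \emph{code gadget} of $|$-children whose shape encodes $a$; a coded pattern node should then match only tree nodes carrying the intended code, so that the SAT equivalence transfers verbatim.

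The crux---and the step I expect to be hardest---is that a pattern imposes only a \emph{lower} bound on the out-degree of an image, whereas distinguishing labels demands an \emph{exact} test: a node coded for $a$ must not drift onto a node coded for a different label. I would resolve this with a fixed-sum gadget built from two small, mutually exclusive, yet individually detectable shapes, say $F$ = a node with two leaf children and $G$ = a node whose single child itself has a child. A $\wc$-pattern can demand an $F$-child (a $\wc$-node with two $\wc$-leaves, forcing out-degree $\ge 2$), which no $G$-node meets, and symmetrically demand a $G$-child (a $\wc$-$\wc$-$\wc$ chain, forcing a grandchild), which no $F$-node meets, provided the genuine reduction children are given yet another shape so as not to interfere. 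Giving the code-$a$ node exactly $a$ copies of $F$ and $s-a$ copies of $G$, and having the coded pattern node demand $\ge a$ of the former and $\ge s-a$ of the latter, the two simultaneous lower bounds against the fixed total $s$ pin the code \emph{exactly} to $a$. With label tests thus emulated the reduction is complete, and together with Proposition~\ref{prop:2} both $\mathcal{M}_\inj^\wc$ and $\mathcal{M}_\inj^\circ$ are NP-complete.
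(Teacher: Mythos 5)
Your treatment of $\mathcal{M}_\inj^\circ$ is a sound alternative to the paper's: the genuine/decoy markers with the $\#$-child test correctly take over the occupancy role of the wildcard paths $Y_i$, and both directions of the equivalence go through. (The paper instead gets the $\circ$ case for free from the $\wc$ case: once the tree is monochromatic, replacing every $\wc$ in the pattern by the unique tree label changes nothing.)

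The $\wc$ case, however, has a real gap, and it sits exactly at the step you flag as the crux. Your $F$-probe is satisfied by \emph{any} child of out-degree at least $2$, and your $G$-probe by \emph{any} child possessing a grandchild. The exact-pinning argument ($a'\geq a$ and $s-a'\geq s-a$ against the fixed total $s$) therefore works only if the probes can land \emph{exclusively} on code-gadget roots. But a coded tree node also has its genuine reduction children (and, for a genuine marker, the $\#$-coded child), and every such child is itself a coded node, hence has at least $s\geq 2$ children and has grandchildren --- so it satisfies \emph{both} probe shapes. Your parenthetical fix (``give the genuine reduction children yet another shape so as not to interfere'') cannot be realized: no node with at least two children and depth at least two escapes either probe, and coded nodes unavoidably have both properties. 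Concretely, a pattern node coded for label $a$ can embed into a tree node coded for $a'$ whenever the target has enough extra absorbing children to cover the deficit; for instance a pattern marker coded for $c_j$ embeds into a \emph{genuine} tree marker coded for $c_{j'}$ whenever the two codes differ by at most one, because the $\#$-child absorbs one leaked probe of either kind. This breaks the label discrimination on which the SAT equivalence rests. The paper avoids the problem in two ways worth borrowing: it starts from the reduction of Theorem~\ref{thm:4}, in which every discriminating label occurs only at a \emph{leaf}, so the label gadgets replace leaves and never compete with internal reduction structure; and its gadgets $T^k_s$ use two nodes of degree $k+3$, strictly larger than every degree occurring in the base tree, so a probe demanding that degree cannot leak onto a base node, while the path of length $s$ between the two high-degree nodes guarantees that $T^k_i$ does not embed injectively into $T^k_j$ for $i\neq j$.
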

\subsection{Ancestor-preserving embeddings}
\begin{theorem}
  \label{thm:1}
  $\mathcal{M}_\anc$ is NP-complete.
\end{theorem}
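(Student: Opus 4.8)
The plan is to reduce SAT to $\mathcal{M}_\anc$; membership in NP is already supplied by Proposition~\ref{prop:2}, so only NP-hardness needs work. The reduction of Theorem~\ref{thm:6} cannot be reused verbatim, and understanding why points to the right construction: there, two clause fragments satisfied by the same literal were free to descend into the same linear branch $X_i$, but under ancestor-preservation the images of two distinct children of $\root_p$ must be \emph{incomparable}, so no two witnesses may share a single root-to-leaf path. The key idea is therefore to replace the long linear chains of Theorem~\ref{thm:6} by shallow ``star'' gadgets in which the clause occurrences of a literal appear as \emph{siblings}, and to let ancestor-preservation itself --- rather than injectivity --- play the role of the occupancy/blocking mechanism.

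Concretely, for a CNF formula $\varphi=c_1\wedge\dots\wedge c_m$ over $x_1,\dots,x_n$ I would build, for each variable $x_i$, two trees $X_i$ and $\bar{X}_i$, each rooted at a node labeled $x_i$: the children of the root of $X_i$ are leaves labeled $c_j$, one for each clause in which $x_i$ occurs positively, and symmetrically the root of $\bar{X}_i$ carries a leaf labeled $c_j$ for each clause containing $\lnot x_i$. The tree is $t_\varphi=r(X_1,\bar{X}_1,\dots,X_n,\bar{X}_n)$ and the pattern is the (height-$1$) pattern
\[
p_\varphi=r[.//x_1]\cdots[.//x_n][.//c_1]\cdots[.//c_m],
\]
whose children of $r$ are the $n$ \emph{selectors} (single nodes labeled $x_i$) and the $m$ \emph{checkers} (single nodes labeled $c_j$). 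Since the only $x_i$-labeled nodes of $t_\varphi$ are the two subtree roots, each selector is forced onto the root of $X_i$ or of $\bar{X}_i$, and this choice will encode the truth value; each checker, in turn, must land on a $c_j$-labeled leaf. Observe that the gadget necessarily uses unbounded degree, which is consistent with $\mathcal{M}_\anc^{\mathrm{D}\le k}$ being tractable.

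For the \emph{if} direction, given a valuation $V$ satisfying $\varphi$, I send the selector $x_i$ to the root of $\bar{X}_i$ when $V(x_i)=\true$ and to the root of $X_i$ otherwise; for each clause I pick a literal satisfied by $V$, say positive $x_i$, and send the checker $c_j$ to the $c_j$-leaf inside $X_i$, whose root is then unoccupied. One checks that the chosen images are pairwise incomparable --- selectors lie in distinct variable subtrees, checkers are leaves and hence sibling-incomparable, and a checker never sits below the selector of its own variable --- so the map is ancestor-preserving. For the \emph{only if} direction I set $V(x_i)=\true$ exactly when the selector $x_i$ is placed at the root of $\bar{X}_i$; if a checker $c_j$ lands inside, say, $X_i$, then its image must be incomparable with the image of the selector $x_i$, so that selector cannot occupy the root of $X_i$, forcing it onto $\bar{X}_i$, whence $V(x_i)=\true$ and the positive occurrence of $x_i$ satisfies $c_j$.

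The step I expect to be the main obstacle --- and the one genuinely different from Theorem~\ref{thm:6} --- is getting the \emph{blocking} argument exactly right: I must guarantee that ``occupying a subtree with a selector'' and ``witnessing a clause in that subtree'' are mutually exclusive, and that this exclusion is enforced \emph{solely} by the ancestor-incomparability of two children of $\root_p$, not by counting or by injectivity as in the weakly-injective case. This is what dictates the design decisions above (a single-node selector pinned to a subtree root by its unique label, and clause witnesses fanned out as leaves so that distinct witnesses can never become comparable); once these are fixed, both directions of the correctness equivalence become routine, and the reduction is clearly polynomial.
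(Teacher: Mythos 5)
Your reduction is correct and is essentially the paper's own: the same variable gadgets $X_i,\bar{X}_i$ with clause-labeled leaves, the same tree $t_\varphi=r(X_1,\bar X_1,\dots,X_n,\bar X_n)$, the same height-$1$ pattern with per-variable selectors and per-clause checkers, and the same use of ancestor-incomparability of the root's children as the blocking mechanism. The only (inessential) difference is that you attach the selectors with descendant edges where the paper uses child edges, a variation the paper itself notes is immaterial.
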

\begin{proof}
  To prove NP-hardness we reduce SAT to $\mathcal{M}_\anc$. We take a
  formula $\varphi=c_1\land c_2 \land \ldots \land c_k$ over variables
  $x_1,\ldots,x_n$ and for every variable $x_i$ we construct two trees:
  $X_i=x_i(c_{j_1},\ldots,c_{j_m})$ such that $c_{j_1},\ldots,c_{j_m}$
  are exactly the clauses satisfied by using the literal $x_i$, and
  $\bar{X}_i=x_i(c_{j_1},\ldots,c_{j_m})$ such that
  $c_{j_1},\ldots,c_{j_m}$ are exactly the clauses using the literal
  $\neg x_i$. The constructed tree is
  \[
  t_\varphi = r(X_1,\bar{X}_1,\ldots,X_n,\bar{X}_n).
  \]
  And the tree pattern (written in XPath syntax) is 
  \[
  p_\varphi = r[x_1]\ldots[x_n][.//c_1]\ldots[.//c_k].
  \]
  An example of the reduction for $\varphi=(x_1\lor \lnot x_3) \land
  (x_1\lor\lnot x_2\lor x_3) \land (\neg x_1 \lor \lnot x_2)$ is
  presented in Fig.~\ref{fig:reduction-anc}.
  \begin{figure}[htb]
  \centering
  \begin{tikzpicture}[semithick,yscale=0.75]
    \begin{scope}[grow via three points={%
        one child at (0,-1) and two children at (-.25,-1) and (.25,-1)}]
    \node at (-1.5,0) {$t_\varphi$};
    \node (r) at (0,0) {$r$};
    \begin{scope}[yshift=-1cm, xshift=-2.5cm]
    \node at (0,0) {$x_1$} edge[-] (r)
      child {node {$c_1$}}
      child {node {$c_2$}}
      ;
    \end{scope}
    \begin{scope}[yshift=-1cm, xshift=-1.25cm]
    \node at (0,0) {$x_1$} edge[-] (r)
      child {node {$c_3$}}
      ;
    \end{scope}
    \begin{scope}[yshift=-1cm, xshift=-.5cm]
    \node at (0,0) {$x_2$} edge[-] (r)
      ;
    \end{scope}
    \begin{scope}[yshift=-1cm, xshift=0.5cm]
    \node at (0,0) {$x_2$} edge[-] (r)
      child {node {$c_2$}}
      child {node {$c_3$}}
      ;
    \end{scope}
    \begin{scope}[yshift=-1cm, xshift=1.75cm]
    \node at (0,0) {$x_3$} edge[-] (r)
      child {node {$c_2$}}
      ;
    \end{scope}
    \begin{scope}[yshift=-1cm, xshift=2.75cm]
    \node at (0,0) {$x_3$} edge[-] (r)
      child {node {$c_1$}}
      ;
    \end{scope}
    \end{scope}
    \begin{scope}[xshift=5.25cm]
      \node at (-1,0) {$p_\varphi$};
      \node (r) at (0,0) {$r$};
      \node (X) at (-1.25,-1) {$x_1$} edge[-] (r);
      \node (X) at (-0.75,-1) {$x_2$} edge[-] (r);
      \node (X) at (-0.25,-1) {$x_3$} edge[-] (r);
      \node (c) at (0.25,-1) {$c_1$} edge[-,double] (r);
      \node (c) at (0.75,-1) {$c_2$} edge[-,double] (r);
      \node (c) at (1.25,-1) {$c_3$} edge[-,double] (r);
    \end{scope}
  \end{tikzpicture}
  \caption{Reduction to $\mathcal{M}_\anc$ for $\varphi=(x_1\lor \lnot
    x_3) \land (x_1\lor\lnot x_2\lor x_3) \land (\neg x_1 \lor \lnot
    x_2)$.}
  \label{fig:reduction-anc}
  \end{figure}
  The main claim is that $(t_\varphi,p_\varphi)\in\mathcal{M}_\anc$
  iff $\varphi\in\mathrm{SAT}$. We prove it analogously to the main
  claim in the proof of Theorem~\ref{thm:6}. The use of
  ancestor-preserving embeddings ensures that the fragments $[x_i]$
  and $[.//c_j]$ are not mapped to the same subtree of $t_\varphi$,
  and this reduction does not work for weakly-injective embeddings.
  \qed
\end{proof}
We point out that in the proof above, the constructed pattern has
height $1$.
\begin{corollary}
  \label{cor:2}
  $\mathcal{M}_\anc^{\mathrm{H}\leq k}$ is NP-complete for every
  $k\geq 1$.
\end{corollary}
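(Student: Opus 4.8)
The plan is to read this off the reduction already built for Theorem~\ref{thm:1}, so that both directions come essentially for free. Membership in NP requires no new argument: Proposition~\ref{prop:2} already asserts that $\mathcal{M}_\anc^{\mathrm{H}\leq k}$ lies in NP for every $k\geq 0$, and in particular for every $k\geq 1$.

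For NP-hardness the key observation is that the SAT-reduction from the proof of Theorem~\ref{thm:1} already respects the height bound. The pattern produced there,
\[
p_\varphi = r[x_1]\ldots[x_n][.//c_1]\ldots[.//c_k],
\]
attaches to the root $r$ only the variable nodes $x_1,\ldots,x_n$ as $|$-children and the clause nodes $c_1,\ldots,c_k$ as $\|$-children, and all of these are leaves. Consequently every non-root node sits at depth $1$ and $\height(p_\varphi)=1$. I would verify this directly from the XPath expression, noting that the height constraint is imposed on the pattern, not on the accompanying tree $t_\varphi$, which remains unrestricted.

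Since $\height(p_\varphi)=1\leq k$ for every $k\geq 1$, we have $(t_\varphi,p_\varphi)\in\mathcal{M}_\anc$ if and only if $(t_\varphi,p_\varphi)\in\mathcal{M}_\anc^{\mathrm{H}\leq k}$, so the correctness already established for Theorem~\ref{thm:1} transfers verbatim: $\varphi\in\mathrm{SAT}$ iff $(t_\varphi,p_\varphi)\in\mathcal{M}_\anc^{\mathrm{H}\leq k}$. This yields NP-hardness uniformly for all $k\geq 1$, which together with NP-membership gives NP-completeness. There is no real obstacle to surmount here; the only points worth checking are that height is measured on the pattern rather than on the tree, and that $1$ is the minimal height attained by $p_\varphi$ --- which is precisely why the statement excludes $k=0$, since height-$0$ patterns are single nodes and the problem degenerates.
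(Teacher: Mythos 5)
Your proposal is correct and matches the paper's own argument exactly: the paper likewise notes that the pattern $p_\varphi$ constructed in the proof of Theorem~\ref{thm:1} has height $1$, so the same reduction establishes NP-hardness for every $k\geq 1$, with NP-membership given by Proposition~\ref{prop:2}. Nothing further is needed.
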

Also, the use of child edges is not essential and they can be replaced
by descendant edges and the reduction does not use $\wc$ labels.
\begin{corollary}
  \label{cor:4}
  $\mathcal{M}_\anc^{\|}$ and $\mathcal{M}_\anc^{\circ}$ are
  NP-complete.
\end{corollary}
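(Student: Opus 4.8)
The plan is to obtain both lower bounds by recycling the reduction from the proof of Theorem~\ref{thm:1}, observing that a single slightly modified instance already meets both restrictions simultaneously. Membership in NP is immediate in both cases from Proposition~\ref{prop:2}, so only NP-hardness needs attention.

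First I would note that the pattern $p_\varphi=r[x_1]\ldots[x_n][.//c_1]\ldots[.//c_k]$ built in the proof of Theorem~\ref{thm:1} already uses no wildcard: all of its labels come from $\{r,x_1,\ldots,x_n,c_1,\ldots,c_k\}$, and the tree $t_\varphi$ only adds the label $\bot$. Hence the very same reduction witnesses NP-hardness of $\mathcal{M}_\anc^{\circ}$ with no change at all.

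For $\mathcal{M}_\anc^{\|}$ I would modify the pattern so that it uses descendant edges exclusively, replacing each child fragment $[x_i]$ by $[.//x_i]$ to obtain $p_\varphi'=r[.//x_1]\ldots[.//x_n][.//c_1]\ldots[.//c_k]$, which has $\child_{p_\varphi'}=\emptyset$ (and still no $\wc$, so this one instance in fact settles both claims at once). The key step is to argue that this substitution creates no spurious embeddings: in $t_\varphi$ every $x_i$-labeled node is a child of the root, i.e.\ sits at depth $1$, so any \emph{proper} descendant of the root carrying the label $x_i$ is forced to be exactly such a depth-$1$ node. Thus $[.//x_i]$ admits precisely the same images as $[x_i]$ did, while the ancestor-preservation requirement (condition $5''$) is completely unaffected by the edge relaxation. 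Consequently the equivalence $(t_\varphi,p_\varphi')\in\mathcal{M}_\anc \iff \varphi\in\mathrm{SAT}$ follows verbatim from the argument for Theorem~\ref{thm:1}.

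The only genuine obstacle is this last verification — confirming that relaxing a child edge to a descendant edge cannot enlarge the set of admissible images of the $x_i$-fragments — and it is resolved precisely by the depth-uniformity of the $x_i$ labels in $t_\varphi$. Everything else is inherited directly from the proof of Theorem~\ref{thm:1}.
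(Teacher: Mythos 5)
Your proof is correct and takes essentially the same route as the paper, which disposes of this corollary with the one-line observation that the Theorem~\ref{thm:1} reduction uses no wildcards and that its child edges can be replaced by descendant edges; your justification via the depth-uniformity of the $x_i$-labels in $t_\varphi$ is precisely the reason that replacement creates no spurious embeddings. (A harmless slip: the label $\bot$ occurs in the tree of the Theorem~\ref{thm:6} reduction, not in $t_\varphi$ from Theorem~\ref{thm:1}, but this is irrelevant since $\mathcal{M}_\anc^{\circ}$ constrains only the pattern's labels.)
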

Bounding the degree of a node in the tree pattern renders, however,
checking the existence of an ancestor-preserving embedding tractable.
\begin{theorem}
  \label{thm:2}
  For any $k\geq 0$, $\mathcal{M}_\anc^{\mathrm{D}\leq k}$ is in P.
\end{theorem}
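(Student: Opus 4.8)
The plan is to design a bottom-up dynamic programming algorithm over the nodes of $t$, exploiting a clean structural characterization of ancestor-preserving embeddings. First I would establish that such embeddings are \emph{compositional}: if $h$ is an ancestor-preserving embedding of $p$ in $t$ and $h(n)=u$, then the restriction of $h$ to the subtree $p_n$ rooted at $n$ is itself an ancestor-preserving embedding of $p_n$ into the subtree $t_u$ rooted at $u$. Indeed, every node of $p_n$ is a descendant of $n$, so by condition $5''$ its image is a descendant of $u$, i.e.\ lies in $t_u$. Moreover, any two distinct children $n_1,n_2$ of a node $n$ are incomparable in $p$, so condition $5''$ forces $h(n_1)$ and $h(n_2)$ to be incomparable in $t$; hence sibling subtrees are mapped into \emph{disjoint} subtrees of $t$ rooted at an antichain. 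Conversely, gluing embeddings of the child subtrees that sit at an antichain yields a global ancestor-preserving embedding: injectivity is automatic from disjointness, and ancestor-preservation is checked by a short case analysis on the lca of two nodes.

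This suggests the table $\mathrm{Emb}(n,u)$, defined to be true iff $p_n$ admits an ancestor-preserving embedding into $t_u$ sending $n$ to $u$; the answer to the whole problem is $\mathrm{Emb}(\root_p,\root_t)$. I would compute it bottom-up. A leaf $n$ gives $\mathrm{Emb}(n,u)=\true$ exactly when the labels are compatible (i.e.\ $\lab_p(n)=\wc$ or $\lab_p(n)=\lab_t(u)$). For an internal node $n$ with $|$-children $C^{|}$ and $\|$-children $C^{\|}$, and for $u$ with immediate children $w_1,\dots,w_c$, I must place each $|$-child at a distinct $w_j$ (with $\mathrm{Emb}(n_i,w_j)$ true) and place the $\|$-children as an antichain at proper descendants of $u$. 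The key point from the structural analysis is that a subtree $t_{w_j}$ occupied by a $|$-child (which must sit exactly at $w_j$) becomes off-limits to every other child, since everything in it is comparable to $w_j$; by contrast, several $\|$-children may be packed into one remaining subtree $t_{w_j}$ provided their images form an antichain inside it.

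To handle the packing of descendant-children I would introduce an auxiliary table $\mathrm{Pack}(Q,v)$, for a set $Q$ of $\|$-children and a node $v$ of $t$, that is true iff the forest $\{p_{n_i}:i\in Q\}$ embeds as an antichain into $t_v$. Its recursion mirrors the structure above: a singleton can either be placed with its root at $v$ (value $\mathrm{Emb}(n_i,v)$) or pushed into one child subtree, while a set of size at least two must be split among the immediate child subtrees of $v$ so that each part is recursively packable. The value of $\mathrm{Emb}(n,u)$ then reduces to an assignment problem over the children $w_1,\dots,w_c$ of $u$: each $w_j$ is either unused, hosts a single $|$-child at its root, or hosts a packable subset of $\|$-children, and all of $C^{|}\cup C^{\|}$ must be covered exactly once. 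Since the degree is bounded by the constant $k$, the sets $Q$ and $C^{|}\cup C^{\|}$ have at most $k$ elements, so there are only $O(2^k)$ relevant subsets; the covering can then be resolved by a DP over $R\subseteq C^{|}\cup C^{\|}$ that sweeps through $w_1,\dots,w_c$, in time $O(|t|\cdot 3^k)$ per entry. With $O(|p|\cdot 2^k\cdot|t|)$ entries in total, the whole algorithm runs in polynomial time.

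The main obstacle is precisely the packing of descendant-edge children: unlike child-edge children, which must occupy pairwise distinct immediate subtrees of $u$ and thus reduce to an ordinary bipartite matching, several descendant-children may share one subtree, so their placement is a genuine set-partition/covering problem rather than a matching. This is exactly where the degree bound is indispensable---it keeps the number of relevant subsets, and hence the assignment DP, of constant size. Without it the subproblem is essentially the NP-complete problem of Theorem~\ref{thm:1}, so one cannot expect such a decomposition in the unbounded-degree case.
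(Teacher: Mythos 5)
Your proof is correct and rests on the same two pillars as the paper's: the compositional characterization (the images of the children of a pattern node must form an antichain of the appropriate child/descendant type, and conversely any such antichain of locally embeddable targets glues into a global ancestor-preserving embedding), and a bottom-up table --- your $\mathrm{Emb}(n,u)$ is exactly the paper's $\Phi(m)=\{n\in N_t \mid t|_n\preccurlyeq_\anc p|_m\}$. Where you diverge is in how a single table entry is computed. The paper does something much blunter: it simply enumerates all tuples $(n_1,\ldots,n_k)\in\Phi(m_1)\times\cdots\times\Phi(m_k)$ of candidate images anywhere in $t|_n$ and checks pairwise incomparability plus the edge-type conditions directly; since $k$ is a constant, this product has polynomial size and no packing, partitioning, or matching subproblem ever arises. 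Your extra machinery --- the $\mathrm{Pack}(Q,v)$ table and the subset-covering sweep over the children of $u$ --- is sound (the off-limits rule for subtrees hosting $|$-children and the ``no element at $v$ when $|Q|\ge 2$'' rule are both right), and it buys a genuinely better running time, roughly $O(|p|\cdot|t|^2\cdot 6^k)$ versus the paper's $O(|p|\cdot|t|^{k+1})$, i.e.\ fixed-parameter tractability in $k$ rather than mere membership in P for each fixed $k$. But for the theorem as stated this refinement is unnecessary, and your closing claim that the descendant-children placement is ``a genuine set-partition problem'' that forces this machinery is slightly overstated: choosing the target nodes directly from the candidate sets, rather than first choosing which subtree of $u$ each child lands in, dissolves the partition problem into a constant-width Cartesian product.
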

\begin{proof}
  We fix a tree $t$ and a tree pattern $p$. For a node $m\in N_p$ we
  define $\Phi(m) = \{n\in N_t \mid t|_n \preccurlyeq_\anc p|_m\}$,
  where $t|_n$ is a subtree of $t$ rooted at $n$ (and similarly, we
  define $p|_m$). Naturally, $t\preccurlyeq_\anc p$ iff
  $\root_t\in\Phi(\root_p)$.

  We fix a node $m\in N_p$ with children $m_1,\ldots,m_k$, suppose
  that we have computed $\Phi(m_i)$ for every $i\in\{1,\ldots,k\}$,
  and take a node $n\in N_t$. We claim that $n$ belongs to $\Phi(m)$
  if and only if the following two conditions are satisfied: 1)
  $\lab_t(n) = \lab_p(m)$ unless $\lab_p(m)=\wc$, 2) there is
  $(n_1,\ldots,n_k)\in\Phi(m_1)\times\ldots\times\Phi(m_k)$ such that
  a) $n_i$ is not an ancestor of $n_j$ for all $i\neq j$, b)
  $(n,n_i)\in\child_t$ if $(m,m_i)\in\child_p$, and c)
  $(n,n_i)\in\child_t^+$ if $(m,m_i)\in\desc_p$.

  Since $k$ is bounded by a constant, the product
  $\Phi(m_1)\times\ldots\times\Phi(m_k)$ is of size polynomial in the
  size of $t$, and therefore, the whole procedure works in polynomial
  time too. \qed
\end{proof}
Finally, gadgets similar to those in Thm.~\ref{thm:7} allow us dispose
of labels altogether.
\begin{theorem}
  \label{thm:8}
  $\mathcal{M}_\anc^{\wc}$ is NP-complete.
\end{theorem}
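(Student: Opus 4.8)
Membership in NP is immediate from Prop.~\ref{prop:2}. For hardness the plan is to adapt the reduction of Thm.~\ref{thm:1}, which already produces, from a CNF formula $\varphi$, a height-$1$ ancestor-preserving instance $(t_\varphi,p_\varphi)$, and then to eliminate every label by replacing it with a purely structural \emph{counting gadget} built from $\wc$ nodes, exactly in the spirit of Thm.~\ref{thm:7}. I would first note that the root label is irrelevant: condition~$1$ forces $h(\root_p)=\root_t$, so both roots may be set to $\wc$. The genuine labels that must be simulated are the $n$ variable markers $x_1,\dots,x_n$ and the $k$ clause markers $c_1,\dots,c_k$; these are precisely what pin the variable fragments $[x_i]$ to the subtrees $X_i,\bar X_i$ and the clause fragments $[.//c_j]$ to the correct clause leaves.

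The gadget exploits the observation recalled before Thm.~\ref{thm:7}: if a pattern node carries $m$ $|$-children, then by condition~$2$ together with injectivity its image must have at least $m$ distinct $|$-children. I would therefore fix an injective code $a\mapsto N_a\in\mathbb{N}$ for the finitely many labels $a$ and, on both the tree and the pattern side, hang under every node that used to carry label $a$ a fresh bundle of $N_a$ $\wc$-leaves attached by $|$-edges, placing the bundle on a dedicated child so that it does not collide with the clause children already present under the $x_i$-nodes of $t_\varphi$. A fragment coded $N_a$ can then embed only into a node whose image offers at least $N_a$ free child slots, yielding a lower bound matching the intended label. Because the bi-directional requirement $5''$ forces incomparable pattern nodes to map to incomparable tree nodes, each bundle lives in its own branch and cannot perturb the ancestor/incomparability relations that drive the original reduction.

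The correctness argument would then mirror the main claim of Thm.~\ref{thm:1}: a satisfying valuation yields an embedding by sending each $[x_i]$ to the appropriate $X_i$ or $\bar X_i$ and each $[.//c_j]$ to a satisfied clause leaf, the gadgets matching because codes agree; conversely, an embedding selects exactly one of the two $x_i$-subtrees per variable, inducing a valuation, while the clause fragments witness satisfaction.

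The hard part is that the lower-bound property alone only guarantees $N_a\le N_b$, so I must prevent a fragment coded $N_a$ from cheating by landing on a node of a \emph{larger} code $N_b$ (thereby confusing two variables or two clauses) and prevent a bundle from being absorbed by an unrelated part of the tree. I would resolve this by a global \emph{tightness} argument: choose the codes and pad $t_\varphi$ so that the total number of $\wc$-leaf slots supplied by the tree exactly equals the total demanded by all gadgets, so that every slot must be used and the embedding of the gadget layer is forced to be a bijection; this upgrades each lower bound $N_a\le N_b$ to an equality, i.e., to exact label matching. Equivalently, one may take the gadgets to be an antichain of equal-size $\wc$-trees, since an ancestor-preserving embedding between two trees of the same size is a bijection and hence an isomorphism, so a type-$a$ gadget embeds into a type-$b$ gadget only when $a=b$. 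Getting this bookkeeping right so that exactness holds simultaneously for the $|$-encoded variable fragments and the $\|$-encoded clause fragments, without the two layers borrowing each other's slots, is the delicate point of the proof.
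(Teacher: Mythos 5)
Your high-level plan is the same as the paper's: reuse the reduction of Thm.~\ref{thm:1}, push the meaningful labels onto dedicated leaf positions, and replace each label by a purely structural $\wc$-gadget in the spirit of Thm.~\ref{thm:7}. The difference, and the problem, is the gadget itself. You correctly identify that a bundle of $N_a$ leaf children only yields the one-sided inequality $N_a\le N_b$, but neither of your proposed repairs closes this gap. The ``global tightness'' argument cannot work for this reduction: the tree $t_\varphi$ necessarily contains \emph{two} subtrees per variable and one $c_j$-leaf per literal occurrence, while the pattern demands only one fragment per variable and one per clause, so the total number of gadget slots supplied by the tree strictly exceeds the total demanded by the pattern; moreover, an embedding is not required to be surjective, so even equal totals would not force every slot to be used. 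The ``equal-size antichain'' variant is closer, but the statement you rely on (same size plus injectivity implies isomorphism) only applies once you know the image of a pattern gadget is \emph{contained in} a single tree gadget; nothing in your construction forces this, since a $\wc$-labelled gadget root may be mapped to an ambient node of $t_\varphi$ and its subtree scattered over ambient nodes and pieces of several gadgets. You need an anchoring mechanism, and you need to exhibit a concrete family of pairwise non-embeddable gadgets --- both of which you defer.

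The paper resolves exactly this point with the gadget $T^k_s$ of Thm.~\ref{thm:7}: two nodes of degree $k+3$ (strictly exceeding every degree occurring in the base tree, which anchors them to gadget nodes only) joined by a child-edge path of length $s$. Since child edges must map to single tree edges, the two anchors of a pattern gadget $T^k_i$ must map to two anchor nodes at distance exactly $i$ in the tree, and such a pair exists only inside a tree gadget $T^k_i$; hence $T^k_i$ embeds into $T^k_j$ iff $i=j$, in both directions and irrespective of sizes. For Thm.~\ref{thm:8} the paper additionally preprocesses the instance of Thm.~\ref{thm:1} so that every non-$\wc$, non-$a$ label sits on a leaf (adding a fresh $x_i$-labelled child under each $x_i$-node) before substituting the $T_s$ gadgets. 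To complete your proof you would have to either adopt such an anchored, exactly-discriminating gadget or actually construct and verify your antichain family together with a containment argument for gadget images; as written, the decisive step of the hardness proof is missing.
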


\subsection{LCA-preserving embeddings}
\begin{theorem}
  \label{thm:5}
  $\mathcal{M}_\lca$ is in P.
\end{theorem}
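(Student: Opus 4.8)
The plan is to adapt the bottom-up dynamic programming of Theorem~\ref{thm:2}, replacing the brute-force product over the children of a node by a bipartite matching; this keeps the computation polynomial even when the degree of the pattern is unbounded. As in that proof, fix a tree $t$ and a pattern $p$, and for $m\in N_p$ set
\[
\Phi(m) = \{ n\in N_t \mid t|_n \preccurlyeq_\lca p|_m \},
\]
so that $t\preccurlyeq_\lca p$ iff $\root_t\in\Phi(\root_p)$. I would compute all the sets $\Phi(m)$ in a single sweep of $p$ from the leaves upward.

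The structural fact I would exploit is that the lca-preserving condition forces the images of distinct child subtrees of a node to lie in distinct child subtrees of its image. Precisely, fix $m\in N_p$ with children $m_1,\ldots,m_d$ and suppose $\Phi(m_1),\ldots,\Phi(m_d)$ are known. I claim that $n\in\Phi(m)$ iff (1)~$\lab_t(n)=\lab_p(m)$ unless $\lab_p(m)=\wc$, and (2)~one can choose $n_i\in\Phi(m_i)$ lying in pairwise distinct child subtrees of $n$ and respecting the edge types, namely $n_i$ a child of $n$ when $(m,m_i)\in\child_p$ and $n_i$ a proper descendant of $n$ when $(m,m_i)\in\desc_p$. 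For the forward direction, if $h$ is an lca-preserving embedding of $p|_m$ into $t|_n$ then $n_i:=h(m_i)$ works, since $\lca_p(m_i,m_j)=m$ forces $\lca_t(n_i,n_j)=h(m)=n$, i.e.\ $n_i$ and $n_j$ sit below distinct children of $n$. Conversely, gluing lca-preserving embeddings of the $p|_{m_i}$ into the $t|_{n_i}$ yields an lca-preserving embedding of $p|_m$ into $t|_n$: pairs of nodes from different subtrees have lca $m$ in $p$ and lca $n$ in $t$, while pairs inside a single subtree are handled by the local embedding (and the leaf case $d=0$ gives the base, $\Phi(m)$ being all nodes of matching label).

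Condition (2) is exactly a bipartite matching. Let $v_1,\ldots,v_r$ be the children of $n$ in $t$, and declare $v_j$ \emph{admissible} for $m_i$ if either $(m,m_i)\in\child_p$ and $v_j\in\Phi(m_i)$, or $(m,m_i)\in\desc_p$ and the subtree $t|_{v_j}$ contains a node of $\Phi(m_i)$. The latter test runs in constant time after precomputing, for each $m_i$, the upward closure
\[
\Psi(m_i) = \{ v\in N_t \mid \Phi(m_i)\cap N_{t|_v}\neq\emptyset \},
\]
which is obtained by one bottom-up pass over $t$. Then $n\in\Phi(m)$ iff (1) holds and the bipartite graph on $\{m_1,\ldots,m_d\}$ and $\{v_1,\ldots,v_r\}$ with these admissibility edges has a matching saturating every $m_i$; such a matching assigns each child subtree of $p$ to a distinct child subtree of $t$, precisely what the lca condition demands.

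The procedure is polynomial: each $\Psi(m_i)$ is computed once per node of $p$, and for each of the $|N_p|\cdot|N_t|$ pairs $(m,n)$ one solves a single bipartite matching of size at most $\deg_p(m)\times\deg_t(n)$ by standard augmenting-path methods. I expect the crux of the argument to be the equivalence underlying (2): establishing that the sibling constraint imposed by lca-preservation is a clean \emph{distinct-subtrees} requirement, hence expressible as a matching, which is exactly why unbounded degree causes no difficulty here — in contrast to the ancestor-preserving case, whose weaker ``not an ancestor'' constraint permits siblings to share an initial path and therefore resists reduction to matching. \qed
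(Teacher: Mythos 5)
Your proposal is correct and follows essentially the same route as the paper: the same bottom-up computation of $\Phi(m)=\{n\in N_t\mid t|_n\preccurlyeq_\lca p|_m\}$, the same bipartite graph between the children of $m$ and the children of $n$ (with an edge when a $|$-child lands in $\Phi(m_i)$ or a $\|$-child has a witness in the subtree below $n_j$), and the same reduction to finding a matching saturating the pattern side. The extra details you supply — the explicit two-way verification of the local claim and the precomputation of the upward closures $\Psi(m_i)$ — are welcome elaborations of steps the paper leaves implicit, but they do not change the argument.
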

\begin{proof}
  We fix a tree $t$ and a tree pattern $p$. For a node $m\in N_p$ we
  define $ \Phi(m) = \{n\in N_t \mid t|_n \preccurlyeq_\lca p|_m\}, $
  where $t|_n$ is a subtree of $t$ rooted at $n$ (and similarly, we
  define $p|_m$).  Naturally, $t \preccurlyeq_\lca p$ if and only if
  $\root_t\in \Phi(\root_p)$. We present a bottom-up procedure for
  computing $\Phi$.
  
  We fix a node $m\in N_p$ with children $m_1,\ldots,m_k$, suppose
  that we have computed $\Phi(m_i)$ for every $i\in\{1,\ldots,k\}$,
  take a node $n\in N_t$, and let $n_1,\ldots,n_\ell$ be its
  children. We claim that $n$ belongs to $\Phi(m)$ if and only if the
  following two conditions are satisfied: 1) $\lab_t(n) = \lab_p(m)$
  unless $\lab_p(m)=\wc$ and 2) the bipartite graph $G=(X\cup Y,E)$
  with $X=\{m_1,\ldots,m_k\}$, $Y=\{n_1,\ldots,n_\ell\}$, and 
  \begin{align*}
    &E=\{
    (m_i,n_j) 
    \mid 
    (m,m_i)\in\child_p \land n_j\in\Phi(m_i)\lor{}\\
    &\phantom{E=\{(m_i,n_j)\mid{}}
    (m,m_i)\in\desc_p \land \exists n'\in\Phi(m_i).\ 
    (n_j,n')\in\child_t^\ast.
    \},
  \end{align*}
  has a matching of size $k$. In the construction of $E$ we use the
  expression $(n_j,n')\in\child_t^\ast$ because a $\|$-child $m_i$ of
  $m$ needs to be connected with proper descendants of $n$ and these
  are descendants of $n_j$'s. We finish by pointing out that a maximum
  matching of $G$ can be constructed in polynomial
  time~\cite{HoKa73}. \qed
\end{proof}
\section{Related work}
\label{sec:related-work}
Model checking for tree patterns has been studied in the literature in
a variety of variants depending on the requirements on the
corresponding embeddings. They may, or may not, have to be injective,
preserve various properties like the order among siblings, ancestor or
child relationships, label equalities, etc.  In this paper, we
consider unordered, injective embeddings that additionally may be
ancestor- or lca-preserving.

Kilpel{\"a}inen and Mannila \cite{KiMa95} studied the
\emph{unordered tree inclusion} problem defined as
follows. Given labeled trees $P$ and $T$, can $P$ be obtained from $T$
by deleting nodes? Here, deleting a node $u$ entails removing all
edges incident to $u$ and, if $u$ has a parent $v$, replacing the edge
from $v$ to $u$ by edges from $v$ to the children of $u$. The
unordered tree inclusion problem is equivalent to the model checking
for ancestor-preserving embeddings where the tree pattern contains
descendants edges only. \cite{KiMa95} shows NP-completeness for tree
patterns of height 1. Moreover, \cite{MaTh92} shows that the problem
remains NP-complete when all labels in both trees are $\wc$ or
when degrees of all vertices except root are at most 3. These two
results subsume our
Thm.~\ref{thm:1}~and~\ref{thm:8}. \cite{KiMa95,MaTh92} show also the
tractability of the problem when the degrees of all nodes in the tree
pattern are bounded. Thm. \ref{thm:2} generalizes this to allow also
for child edges in the tree patterns.

The tree inclusion problem is a special case of the \emph{minor
  containment} problem for graphs \cite{RoSe86,RoSe95}: given two
graphs $G$ and $H$, decide whether $G$ contains $H$ as a minor, or
equivalently, whether $H$ can be obtained from a subgraph of $G$ by
edge contractions, where contracting an edge means replacing the edge
and two incident vertices by a single new vertex. For trees, edge
contraction is equivalent to node deletion.  Since minor containment
is known to be NP-complete, even for trees, this gives another proof
of NP-completeness for ancestor-preserving embeddings.

Valiente \cite{Valiente05} introduced the \emph{constrained unordered
  tree inclusion} problem where the question is, given labeled trees
$P$ and $T$, whether $P$ can be obtained from $T$ by deleting nodes of
degrees one or two. The polynomial time algorithm given there is based
on the earlier results on subtree homeomorphisms \cite{Chung87} where
unlabeled trees are considered.  The constrained unordered tree
inclusion is equivalent to model checking of lca-preserving embeddings
where all edges in the tree pattern are descendants.  Our
Thm.~\ref{thm:5} slightly generalizes the above result allowing also
for child edges in the pattern.

David~\cite{David08} studied the complexity of ancestor-preserving
embeddings of tree patterns with data comparison (equality and
inequality) and showed their NP-completeness. Although we show that
ancestor-preserving embeddings are NP-complete even without data
comparisons, the reductions used in~\cite{David08} construct tree
patterns of a bounded degree, which shows that adding data comparisons
indeed increases the computational complexity of the model checking
problem.

Recently, Fan et al.~\cite{FLMWW10} studied 1-1 $p$-homomorphisms
which extend injective graph homomorphisms by relaxing the edge
preservation condition. Namely, the edges have to be mapped to
nonempty paths. However, neither the internal vertices nor edges
within the paths have to be disjoint.
In case of trees, 1-1 p-homomorphisms correspond to the
weakly-injective embeddings that we consider in this paper.  By
reduction from \emph{exact cover by 3-sets problem} they have shown
NP-completeness of model checking in the case where the first graph is
a tree and the second is a DAG. We improve this result in
Thm.~\ref{thm:6}~and~\ref{thm:7}.

When embeddings have to preserve the order among siblings, model
checking becomes much easier. The \emph{ordered tree inclusion
  problem} was initially introduced by Knuth \cite[exercise
2.3.2-22]{Knuth68} who gave a sufficient condition for testing
inclusion. The polynomial time algorithms from \cite{KiMa95} is based
on dynamic programming and at each level may compute the inclusion
greedily from left-to-right thanks to the order preservation
requirement.  The tree inclusion is also related to the \emph{ordered
  tree pattern matching} \cite{HoDo82}, where embeddings have to
preserve the order and child-relationship, but they do not have
necessarily to preserve root.

\section{Conclusions and future work}
\label{sec:concl-future-work}
We have considered three different notions of injective embeddings of
tree patterns and for each of them we have studied the problem of
model checking. 
\begin{table}[htb]
  \vspace{-10pt}
  \fontsize{8}{10}
  \selectfont
  \centering
  \begin{tabular}{|cc|c|c|c|c|}
    \cline{3-6}
    \multicolumn{2}{c|}{}&
    $~\preccurlyeq_\std~$ & 
    $\preccurlyeq_\inj$ & 
    $\preccurlyeq_\anc$ &
    $\preccurlyeq_\lca$\\
    \hline
    \multicolumn{2}{|c|}{unconstrained} 
    &
    \multirow{9}{*}{P~\cite{GoKoPi03}}
    &
    NP-c.~(Thm.~\ref{thm:6})
    &
    NP-c.~(Thm.~\ref{thm:1})~\cite{KiMa95,MaTh92}
    &
    \multirow{9}{*}{
      \begin{tabular}[c]{c}
        P\\(Thm.~\ref{thm:5})
      \end{tabular}
    }
    \\
    \cline{1-2}
    \cline{4-5}
    \multirow{2}{*}{$k$-bounded degree}
    &
    $k\geq 2$ 
    &
    &
    NP-c.~(Cor.~\ref{cor:1})
    &
    \multirow{2}{*}{P~(Thm.~\ref{thm:2})}
    &
    \\
    \cline{2-2}
    \cline{4-4}
    &
    $k=1$&
    &
    P~(Prop.~\ref{prop:7})
    &
    &
    \\
    \cline{1-2}
    \cline{4-5}
    \multirow{2}{*}{$k$-bounded height}&
    $k\geq 2$&
    &
    NP-c.~(Thm.~\ref{thm:7})
    &
    \multirow{2}{*}{NP-c.~(Thm.~\ref{thm:1})~\cite{KiMa95}}
    &
    \\
    \cline{2-2}
    \cline{4-4}
    &
    $k=1$&
    &
    P~(Prop.~\ref{prop:3})
    &
    &
    \\
    \cline{1-2}
    \cline{4-5}
    \multicolumn{2}{|c|}{$\wc$ labels only}&
    &
    NP-c.~(Thm.~\ref{thm:7})
    &
    NP-c.~(Thm.~\ref{thm:8})~\cite{MaTh92}
    &
    \\
    \cline{1-2}
    \cline{4-5}
    \multicolumn{2}{|c|}{no $\wc$ labels}&
    &
    NP-c.~(Thm.~\ref{thm:7})
    &
    NP-c.~(Cor.~\ref{cor:4})~\cite{KiMa95}
    &
    \\
    \cline{1-2}
    \cline{4-5}
    \multicolumn{2}{|c|}{no $|$-edges}&
    &
    NP-c.~(Cor.~\ref{cor:3})
    &
    NP-c.~(Cor.~\ref{cor:4})~\cite{KiMa95,MaTh92}
    &
    \\
    \cline{1-2}
    \cline{4-5}
    \multicolumn{2}{|c|}{no $\|$-edges}&
    &
    \multicolumn{2}{c|}{P~(Prop.~\ref{prop:8})}&
    \\
    \hline
  \end{tabular}
  \vspace{5pt}
  \normalsize
  \caption{Summary of complexity results}
  \label{tab:summary}
\end{table}
Table~\ref{tab:summary} summarizes the complexity results. All our
results extend to embeddings between pairs of tree patterns, used for
instance in static query analysis~\cite{MiSu04}. Although some of our
results are subsumed by or can be easily obtained from existing
results, our reductions and algorithms are simple and clean.  In
particular, we show intractability with direct reductions from SAT.

In the future, we would like to find out whether there is an algorithm
for checking lca-preserving embeddings that does not rely on
constructing perfect matchings in bipartite graphs. The exact bound on
complexity of non-injective embeddings of tree patters is a difficult
open problem~\cite{GoKoMa09} and it would be interesting if
establishing exact bounds on tractable cases of injective embeddings
is any easier.


\clearpage
\appendix
\section*{Appendix: Omitted proofs}

\noindent 
\textbf{Proposition~\ref{prop:1}.}
\begin{it}
For any tree $t$ and tree pattern $p$, 1) $t\preccurlyeq_\lca p
\Rightarrow t \preccurlyeq_\anc p$, 2) $t\preccurlyeq_\anc p
\Rightarrow t \preccurlyeq_\inj p$, and 3) $t\preccurlyeq_\inj p
\Rightarrow t \preccurlyeq_\std p$.
\end{it}
\noindent
\begin{proof}
Assume that $t\preccurlyeq_\lca p$ and let $h$ be a $\lca$-preserving embedding. Consider any $n_1, n_2$ such that $(h(n_1), h(n_2)) \in child^\ast_t$. Since $h$ is $\lca$-preserving and $lca(h(n_1), h(n_2))=h(n_1)$, $lca(n_1, n_2) = n_1$ and therefore $n_1$ is an ancestor of $n_2$. So $h$ is ancestor-preserving.

For the proof of 2, consider $p, t$ such that $t\preccurlyeq_\anc p$ and let $h$ be an ancestor-preserving embedding. We show that $h$ is injective. Assume that there are $n_1, n_2$ such that $h(n_1)=h(n_2)$. Since $h$ is ancestor-preserving, $(h(n_1), h(n_2)) \in \child_t^\ast$, and $(h(n_2), h(n_1)) \in \child_t^\ast$, $(n_1,n_2)\in(\child_p\cup\desc_p)^\ast$ and $(n_2,n_1)\in(\child_p\cup\desc_p)^\ast$, so $n_1=n_2$.
$t \preccurlyeq_\inj p$.

Finally, Implication 3 follows from the fact that any injective embedding is an embedding.
\qed
\end{proof}

\noindent 
\textbf{Proposition~\ref{prop:4}.}
\begin{it}
  For any tree $t$ and any tree pattern $p$ that does not use
  descendant edges, $t\preccurlyeq_\inj p$ iff $t\preccurlyeq_\anc p$
  iff $t\preccurlyeq_\lca p$.
\end{it}
\begin{proof}
Assume that $p$ does not use descendant edges. 
By Proposition \ref{prop:1}, it is enough to prove that  $t\preccurlyeq_\inj p$ implies $t\preccurlyeq_\lca p$.

Let $t\preccurlyeq_\inj p$ and $h$ be an embedding from $p$ to $t$. It is easy to see that $h$ is an isomorphisms on a substructure of $t$, and therefore $h$ is \lca{}-preserving and $t\preccurlyeq_\lca p$. Together with Proposition \ref{prop:1} it implies all the equivalences.
\qed
\end{proof}

\noindent 
\textbf{Proposition~\ref{prop:6}.}
\begin{it}
  For any tree $t$ and any path pattern $p$, $t\preccurlyeq_\std$ iff
  $t\preccurlyeq_\inj p$ iff $t\preccurlyeq_\anc p$ iff
  $t\preccurlyeq_\lca p$.
\end{it}
\begin{proof}
Assume that $p$ is a path pattern and $t\preccurlyeq_\std p$ and let $h$ be an embedding from $p$ to $t$. Consider any nodes $n, m$ of $p$ such that there is a path from $n$ to $m$. Clearly, $lca(n, m)=n$. By Properties 2 and 3 of the definition of embeddings, there is also a path from $h(n)$ to $h(m)$, hence $lca(h(n), h(m))=h(n)$. Therefore $h$ is \lca{}-preserving and $t\preccurlyeq_\lca p$. By Proposition \ref{prop:1} we obtain the required equivalence.
\qed
\end{proof}

\noindent
\textbf{Theorem~\ref{thm:4}.}
\begin{it}
  $\mathcal{M}_\inj^{\mathrm{H}\leq k}$ is NP-complete for any $k\geq
  2$.
\end{it}
\begin{proof}
We show how to build, for a given instance $\varphi$ of SAT problem, a pattern $p_\varphi$ and a tree $t_\varphi$ such that $t_\varphi\preccurlyeq_\inj p_\varphi$ if and only if $\varphi$ is satisfiable. 
Let $\varphi = c_1 \wedge c_2 \wedge \dots \wedge c_k$ be an instance of SAT over variables $x_1, \dots, x_n$. We set $\Sigma=\{a, c_1,$ \dots$, c_k, s_1, $\dots$, s_n\}$.

For each $i$, we define the tree $X_i$ as follows. Its root is a node $x_i^p$ and it is connected to $k+1$ nodes, namely $x_i^n$, $p_i^1$, $p_i^2$, \dots, $p_i^k$. Node $x_i^n$ has $k+2$ successors, namely $s_i$, $n_i^1, \dots, n_i^{k+1}$. All other nodes have no successors.  

The tree $t_\varphi$ consists of a root $r$ and its $n$ disjoint successors --- $X_1, \dots, X_n$ (see Fig. \ref{fig:reduction-inj-stp}).

Now we define the labeling of $t_\varphi$. Let $c_{i_1}, \dots, c_{i_l}$ be the clauses with the positive occurrence of $x_i$, and $c_{j_1}, \dots, c_{j_{l'}}$ be the clauses with the negative occurrence of $x_i$. For all $s\leq l$, we label $p_s$ with $c_{i_s}$. Similarly, for all $s\leq l'$ we label $n_s$ by $c_{j_s}$. We label $s_i$ by $s_i$ and all other nodes by $a$.

The pattern $p_\varphi$ is as presented at Fig. \ref{fig:reduction-inj-stp}. Clearly, its depth is bounded by $2$.

  \begin{figure}[ht!]
    \centering
    \begin{tikzpicture}[semithick]
      \node at (0, 0) (n0)  {$\wc$};
      \node at (-5,-1) (n1)  {$\wc$}   edge[-,double] (n0);
      \node at (-7,-2) (n11) {$s_1$} edge[-,double] (n1);
      \node at (-6,-2) (n11) {$\wc$} edge[-] (n1);
      \node at (-4,-2) (n11) {$\wc$} edge[-] (n1);
      \node at (-1,-1) (nn)  {$\wc$}   edge[-,double] (n0);
      \node at (-3,-2) (nn1) {$s_n$} edge[-,double] (nn);
      \node at (-2,-2) (nn1) {$\wc$} edge[-] (nn);
      \node at (0,-2) (nn1) {$\wc$} edge[-] (nn);
      \node at (1,-1) (c1)  {$c_1$}   edge[-,double] (n0);
      \node at (3,-1) (ck)  {$c_k$}   edge[-,double] (n0);
\coordinate [label=center:{\dots}] (A) at (-1, -2);
\coordinate [label=center:{$k+1$-times}] (A) at (-1, -2.4);
\coordinate [label=center:{\dots}] (A) at (-5, -2);
\coordinate [label=center:{$k+1$-times}] (A) at (-5, -2.4);
\coordinate [label=center:{\dots}] (A) at (2, -1);
\coordinate [label=center:{\dots}] (A) at (-3, -1);
    \end{tikzpicture}
\\

    \begin{tikzpicture}[semithick]
      \node at (5, 0) (n0)  {$r$};
      \node at (3,-1) (n1)  {$x_1^p$}   edge[-] (n0);
      \node at (0,-2) (n1s)  {$p_1^1$}   edge[-] (n1);
      \node at (2,-2) (n1s)  {$p_1^k$}   edge[-] (n1);
      \node at (3,-2) (n11)  {$x_1^n$}   edge[-] (n1);
      \node at (-0.2,-3) (n11s)  {$n_1^1$}   edge[-] (n11);
      \node at (2,-3) (n11s)  {$n_1^{k}$}   edge[-] (n11);
      \node at (3,-3) (n111) {$n_1^{k+1}$} edge[-] (n11);
      \node at (4,-3) (n111) {$s_1$} edge[-] (n11);
      \node at (9,-1) (nn)  {$x_n^p$}   edge[-] (n0);
      \node at (6,-2) (n1s)  {$p_n^1$}   edge[-] (nn);
      \node at (8,-2) (n1s)  {$p_n^k$}   edge[-] (nn);
      \node at (9,-2) (nn1)  {$x_n^n$}   edge[-] (nn);
      \node at (5.8,-3) (n11s)  {$n_n^1$}   edge[-] (nn1);
      \node at (8,-3) (n11s)  {$n_n^{k}$}   edge[-] (nn1);
      \node at (9,-3) (n11s)  {$n_n^{k+1}$}   edge[-] (nn1);
      \node at (10,-3) (nn11) {$s_n$} edge[-] (nn1);
\coordinate [label=center:{\dots}] (A) at (5, -1);
\coordinate [label=center:{\dots}] (A) at (1, -2);
\coordinate [label=center:{\dots}] (A) at (7, -2);
\coordinate [label=center:{\dots}] (A) at (1, -3);
\coordinate [label=center:{\dots}] (A) at (7, -3);
    \end{tikzpicture}
    \caption{The pattern $p_\varphi$ (at the top) and the tree $t_\varphi$ (without the labeling).}
    \label{fig:reduction-inj-stp}
  \end{figure}

Assume that $t_\varphi\preccurlyeq_\inj p_\varphi$ and let $h$ be the corresponding embedding. Let $Y$ be the set of all the successors of $root_p$ labeled by $\wc$ in $p_\varphi$ and $h(Y)$ be the image of $Y$. A quick check shows that for each $i$ there is exactly one node from $\{x_i^p,x_i^n\}$ in $h(Y)$. We define the valuation for $\varphi$ such that $x_i$ is positive if $x_i^n \in h(Y)$ and negative otherwise. 

Consider any clause $c_s$ and let $m$ be the node in $p_\varphi$ labeled by $c_s$. Assume that $h(m)=p_i^j$ for some $i, j$. It means that $x_i$ occurs positively in $c_s$ and that $x_i^p$ does not belong to $h(Y)$ --- otherwise, if $x_i^p = h(m')$ for some $m'$, then all successors of $x_i^p$ would be results of $h$ applied to successors of $m'$, contradicting the facts that $h(m)=p_i^j$ and $h$ is injective. Therefore, $c_s$ is satisfied.

The proof that if $\varphi$ is satisfiable then $t_\varphi\preccurlyeq_\inj p_\varphi$ should now be straightforward. 
\qed
\end{proof}

\noindent
\textbf{Theorem~\ref{thm:7}.}
\begin{it}
  $\mathcal{M}_\inj^{\wc}$ and $\mathcal{M}_\inj^\circ$ are
  NP-complete.
\end{it}
\begin{proof}
  \begin{figure}[t!]
    \centering
    \begin{tikzpicture}[semithick]
      \node at (0, 0) (n0)  {$\wc$};
      \node at (0,-1) (n1)  {$\wc$}   edge[-] (n0);
      \node at (1,-1) (n11)  {$\wc$}   edge[-] (n0);
      \node at (3,-1) (n14)  {$\wc$}   edge[-] (n0);
      \node at (0,-2) (n2)  {$\wc$}   edge[-] (n1);
      \node at (0,-3) (n4)  {$\wc$}   edge[dotted] (n2);
      \node at (0,-4) (n5) {$\wc$} edge[-] (n4);
      \node at (0,-5) (n6) {$\wc$} edge[-] (n5);
      \node at (1,-5) (n61) {$\wc$} edge[-] (n5);
      \node at (3,-5) (n64) {$\wc$} edge[-] (n5);
\coordinate [label=center:{$k+3$-times}] (A) at (2, -1.4);
\coordinate [label=center:{$k+3$-times}] (A) at (2, -5.4);
\coordinate [label=center:{\dots}] (A) at (2, -1);
\coordinate [label=center:{\dots}] (A) at (2, -5);
\coordinate [label=right:{(s nodes)}] (A) at (0.15, -2.5);
    \end{tikzpicture}
    \caption{The tree $T^k_s$.}
    \label{fig:reduction-inj-ts}
  \end{figure} 
  
For the $\mathcal{M}_\inj^{\wc}$  case, we simply adjust the proof of  Theorem~\ref{thm:4}, taking the advantage of the fact that all the nodes with labels different than $\wc$ are in leaves. 

For each $s \in \mathbb{N}$, we define tree $T^k_s$ that consists of two nodes with $k+3$ successors and a path connecting them of length $s$ (see Fig. \ref{fig:reduction-inj-ts}). Note that in the original $t_\varphi$ does not contain any node of degree $\geq k+3$.

We replace all nodes (in $t_\varphi$ and $p_\varphi$) labeled by $c_s$ by $T^k_s$ and all nodes labeled by $n_s$ by $T^k_{k+s}$. Then, in $t_\varphi$, we replace all labels by $a$. It is readily checked that for any $i \neq j$ there is no embedding from $T_i^k$ to $T_j^k$, and since $t_\varphi$ contains no nodes with degree at least $k+3$, there is an embedding from the modified pattern to the modified tree if and only if $\varphi$ is satisfiable.

For the $\mathcal{M}_\inj^{\circ}$  case, we simply replace all $\wc$ in the pattern defined above by $a$, the only label present in the tree.
  \qed
\end{proof}

\noindent
\textbf{Theorem~\ref{thm:8}.}
\begin{it}
  $\mathcal{M}_\anc^{\wc}$ is  are NP-complete.
\end{it}
\begin{proof}
We modify the proof of Theorem \ref{thm:1}. First, we adjust the tree and the pattern by adding one node below each $x_i$, label it by $x_i$ and label old $x_i$ by $a$. We also replace $r$ by $a$ and all $a$ in the pattern by $\wc$ (see Fig. \ref{fig:newold}).

  \begin{figure}[htb]
  \centering
  \begin{tikzpicture}[semithick,yscale=0.75]
    \begin{scope}[grow via three points={%
        one child at (0,-1) and two children at (-.25,-1) and (.25,-1)}]
    \node at (-1.5,0) {$t_\varphi$};
    \node (r) at (0,0) {$a$};
    \begin{scope}[yshift=-1cm, xshift=-2.5cm]
    \node at (0,0) {$a$} edge[-] (r)
      child {node {$x_1$}}
      child {node {$c_1$}}
      child {node {$c_2$}}
      ;
    \end{scope}
    \begin{scope}[yshift=-1cm, xshift=-1.25cm]
    \node at (0,0) {$a$} edge[-] (r)
      child {node {$x_1$}}
      child {node {$c_3$}}
      ;
    \end{scope}
    \begin{scope}[yshift=-1cm, xshift=-.5cm]
    \node at (0,0) {$a$} edge[-] (r)
      child {node {$x_2$}}
      ;
    \end{scope}
    \begin{scope}[yshift=-1cm, xshift=0.5cm]
    \node at (0,0) {$a$} edge[-] (r)
      child {node {$x_2$}}
      child {node {$c_2$}}
      child {node {$c_3$}}
      ;
    \end{scope}
    \begin{scope}[yshift=-1cm, xshift=1.75cm]
    \node at (0,0) {$a$} edge[-] (r)
      child {node {$x_3$}}
	  child {node {$c_2$}}
      ;
    \end{scope}
    \begin{scope}[yshift=-1cm, xshift=2.75cm]
    \node at (0,0) {$a$} edge[-] (r)
      child {node {$x_3$}}
      child {node {$c_1$}}
      ;
    \end{scope}
    \end{scope}
    \begin{scope}[xshift=5.25cm]
      \node at (-1,0) {$p_\varphi$};
      \node (r) at (0,0) {$\wc$};
      \node (X1) at (-1.25,-1) {$\wc$} edge[-] (r);
      \node (X2) at (-0.75,-1) {$\wc$} edge[-] (r);
      \node (X3) at (-0.25,-1) {$\wc$} edge[-] (r);
      \node (a) at (-1.25,-2) {$x_1$} edge[-] (X1);
      \node (a) at (-0.75,-2) {$x_2$} edge[-] (X2);
      \node (a) at (-0.25,-2) {$x_3$} edge[-] (X3);
      \node (c) at (0.25,-1) {$c_1$} edge[-,double] (r);
      \node (c) at (0.75,-1) {$c_2$} edge[-,double] (r);
      \node (c) at (1.25,-1) {$c_3$} edge[-,double] (r);
    \end{scope}
  \end{tikzpicture}
  \caption{Adjusted reduction to $\mathcal{M}_\anc$ for $\varphi=(x_1\lor \lnot
    x_3) \land (x_1\lor\lnot x_2\lor x_3) \land (\neg x_1 \lor \lnot
    x_2)$.}
  \label{fig:newold}
  \end{figure}

The obtained tree and pattern have the following property: the only nodes that are not labeled by $\wc$ or $a$ are leaves.
By virtually the same way as in Theorem \ref{thm:7} we can replace them by trees $T_s$.
  \qed
\end{proof}
\end{document}